\newtheorem{theo}{Theorem}[section]
\newtheorem{lemma}[theo]{Lemma}
\newtheorem{prop}[theo]{Proposition}
\newtheorem{cor}[theo]{Corollary}
\theoremstyle{definition}
\newtheorem{deftn}[theo]{Definition}
\newtheorem{problem}[theo]{Problem}
\theoremstyle{remark}
\newtheorem{rem}[theo]{Remark}
\def\leq{\leqslant}
\def\geq{\geqslant}
\def\A{\mathbb{A}}
\def\N{\mathbb{N}}
\def\Z{\mathbb{Z}}
\def\E{\mathbb{E}}
\def\calF{\mathcal{F}}
\def\calM{\mathcal{M}}
\def\Prob{\mathbb{P}}
\def\O{\mathcal{O}}
\def\epsilon{\varepsilon}
\def\pr{\text{\rm pr}}
\def\dist{\text{\rm dist}}
\def\card{\text{\rm Card}}
\def\GL{\text{\rm GL}}
\def\Diag{\text{\rm Diag}}
\def\t{{}^{\text{t}}}
\def\partd{\underline{d}}
\title{Random matrix over a DVR and LU factorization}
\author{Xavier Caruso}
\date\today
\begin{document}

\maketitle

\begin{abstract}
Let $R$ be a discrete valuation ring (DVR) and $K$ be its fraction 
field. If $M$ is a matrix over $R$ admitting a LU decomposition, it 
could happen that the entries of the factors $L$ and $U$ do not lie in 
$R$, but just in $K$. Having a good control on the valuations of these 
entries is very important for algorithmic applications. In the paper, we 
prove that in average these valuations are not too large and explain how 
one can apply this result to provide an efficient algorithm computing a 
basis of a coherent sheaf over $\A^1_K$ from the knowledge of its 
stalks.
\end{abstract}

\setcounter{tocdepth}{2}
\tableofcontents

\bigskip

\noindent
\hrulefill

\bigskip

Throughout the paper, we fix a ring $R$ equipped with a discrete 
valuation $v_R : R \to \N \cup \{\infty\}$. We assume that $v_R$ is 
normalized so that it takes the value $1$ and we fix an element $\pi 
\in R$ such that $v_R(\pi) = 1$. We also assume that $R$ is complete 
with respect to the distance defined by $v_R$. The residue field of $R$ 
and its fraction field are denoted by $k$ and $K$ respectively. The 
valuation $v_R$ extends uniquely to $K$ and we continue to denote this 
extension by $v_R$. We finally set $q = \card\:k$ and assume that 
$q$ is finite. Two typical examples of this are (1) $R = \Z_p$ (the ring 
of $p$-adic integers) equipped with the usual $p$-adic valuation and (2) 
$R = k[[x]]$ (the ring of power series) where $k$ is a finite 
field.

If $d$ is a positive integer, we denote by $\Omega$ the ring of square 
matrices of size $d$ with coefficients in $R$. It is a compact additive 
group whose Haar measure is denoted by $\mu$. We assume that $\mu$ is 
normalized so that $(\Omega, \mu)$ is a probability space (\emph{i.e.} 
$\mu(\Omega) = 1$). Thus, it makes sense to study some statistics on 
$\Omega$. Surprinsingly, the literature around this subject seems to be 
very poor. Nevertheless related questions were already addressed by 
Abdel-Ghaffar in \cite{abdel} and Evans in \cite{evans}: the main result 
of \cite{abdel} is the computation of the law of the random variable 
``valuation of the determinant'' (in the case where $R$ is a power 
series ring but his argument works for a more general discrete valuation 
ring) whereas, in \cite{evans}, Evans studies the random variable 
``valuation of the elementary divisors''.

In this paper, we are mainly interested in the random variable $V_L$: 
``valuation of the $L$-part in the LU decomposition''. We give several 
estimations of its law, its expected value and its standard deviation. 
Roughly speaking, we prove that $\E[V_L] = \log_q d + O(1)$ and 
$\sigma(V_L) = O(1)$ where the notation $O(1)$ refers to a quantity 
bounded by a universal constant. We also bound from above the 
probability that $V_L$ deviates from its expectation. For more precise 
statements, we refer to Theorem \ref{th:VLexp}, Theorem \ref{th:VLlaw} 
and Corollary \ref{cor:VLdev} in the introduction of \S \ref{sec:stats}.

In \S \ref{sec:algo}, we move to algorithmic applications. Firstable, we 
propose in \S \ref{subsec:stableLU} a \emph{stable} algorithm to compute 
a LU decomposition of a matrix over $R$ (unfortunately standard Gauss 
elimination is far for being stable) and analyze closely the losses of 
precision it generates in average (which turn out to be optimal in some 
sense). \S \ref{subsec:simulPLU} is devoted to a study of the notion of 
``simultaneous PLU decomposition'', which will play an important role 
for our next (and main) application presented in \S \ref{subsec:sheaf}. 
This application is of geometric nature. We let $X = \A^1_K$ be the 
affine line over $K$. Recall that a coherent subsheaf of $\calF \subset 
\O_X^d$ (where $d$ is some integer) is determined by the data of all its 
stalks $\calF_x \subset \O_{X,x}^d$ at all closed points $x \in X$. 
Furthermore, we know that all such sheaves $\calF$ as above admit a 
global basis. In \S \ref{subsec:sheaf}, we describe an algorithm that 
computes a basis of $\calF$ knowing all its stalks and, once again, 
analyze its stability (which will turn out to be rather good).

\section{Some statistics related to LU decomposition}
\label{sec:stats}

If $A$ is a (commutative) ring, we shall denote by $M_d(A)$ the ring of 
square $d \times d$ matrices. Recall that we have endowed $\Omega = 
M_d(R)$ with its Haar measure. Choosing a matrix at random with respect 
to this measure is just choosing independently each entry at random with 
respect to the Haar measure on $R$. Furthermore, since $R$ is complete, 
every element $x \in R$ can be written uniquely as an infinite sum $x = 
\sum_{i=0}^\infty a_i \pi^i$ where the coefficients $a_i$'s are taken in 
a fixed set $\mathcal R \subset R$ of representatives of elements of 
$k$ (\emph{i.e.} the restriction to $\mathcal R$ of the canonical 
projection $R \to k$ is bijective) and conversely, any sum 
$\sum_{i=0}^\infty a_i \pi^i$ as above converges and then defines an 
element in $R$. With this description, generating a random element (with 
respect to the Haar measure) of $R$ is just choosing at random all 
coefficients $a_i$'s in $\mathcal R$ independently and uniformly.

We shall say that a matrix $M \in M_d(K)$ admits a \emph{LU 
decomposition} if it can be factorized as a product $L(M)\cdot U(M)$ 
where:
\begin{itemize}
\item $L(M)$ is a unit\footnote{It means that all diagonal 
entries are equal to $1$.} lower triangular matrix with coefficients
in  $K$, and
\item $U(M)$ is a upper triangular matrix with coefficients in $K$.
\end{itemize}
We underline that, even if $M$ has coefficients in $R$, we do not 
require that $L(M)$ and $U(M)$ belong to $M_d(R)$. Here are some well 
known facts: (1) an invertible matrix $M$ admits a LU decomposition if 
and only if all its principal minors do not vanish and (2) when it 
exists, a LU decomposition is unique (\emph{i.e.} the matrices $L(M)$ 
and $U(M)$ are uniquely determined).
We will consider $L$ and $U$ as two partially defined functions on 
$M_d(K)$. For $\omega \in \Omega$ such that $L(\omega)$ is defined, let 
us denote by $V_L(\omega)$ the opposite of the smallest valuation of an 
entry of $L(\omega)$. The aim of this section is to study the random 
variable $V_L$. Here are the main results we will prove.

\begin{theo}
\label{th:VLexp}
Setting
\begin{equation}
\label{eq:Eqd}
E(q,d) = \sum_{v=1}^\infty \big[1 - (1-q^{-v})^d\big]
\end{equation}
we have $E(q,d) - \frac 1{q-1} < \E[V_L] \leq E(q,d)$.

Furthermore, the distance between $E(q,d)$ and $\log_q d$ is bounded by 
$\frac 1{\log(2)}$ (and by $1$ if $q \geq 3$).
\end{theo}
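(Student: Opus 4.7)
The plan is to analyze $V_L$ through the Cramer-type formulas for the entries of $L$, to decompose it into ``column contributions'' with tractable conditional laws, and to derive both bounds by combining these with a sequential/conditional argument; the asymptotic comparison between $E(q,d)$ and $\log_q d$ is then handled by a standard integral estimate.

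\emph{Reduction to minors.} For $i > j$, one has $L_{ij} = \det(N_{ij})/\det(M^{(j)})$, where $M^{(j)}$ is the upper-left $j\times j$ principal submatrix of $M$ and $N_{ij}$ is obtained from $M^{(j)}$ by replacing its last row by the first $j$ entries of row $i$. Since both factors lie in $R$, we get $-v(L_{ij}) = v(\det M^{(j)}) - v(\det N_{ij})$, and taking the max over $i\geq j$ (the diagonal contributing $0$) yields $V_L = \max_{j}V^{(j)}$, where $V^{(j)} := v(\det M^{(j)}) - \min_{j\leq i\leq d}v(\det N_{ij})$. The key probabilistic input is: conditionally on rows $1,\ldots,j-1$ of $M$, the determinants $(\det N_{ij})_{j\leq i\leq d}$ are i.i.d.\ uniform on a common ideal $\pi^{v_\lambda}R$ (each is the same linear functional of row $i$, applied to independent rows). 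The common shift cancels in the difference, so the conditional law of $V^{(j)}$ depends only on $n_j = d-j+1$; a direct computation gives $\Prob[V^{(j)}\geq v] = q^{1-v}(q^{n_j-1}-1)/(q^{n_j}-1)$ for $v\geq 1$.

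\emph{Upper and lower bounds.} For the upper bound $\E[V_L]\leq E(q,d)$, I would write $\E[V_L]=\sum_{v\geq 1}\Prob[V_L\geq v]$ and use a sequential conditioning argument along the columns of $M$: at each step $j$, conditionally on the previous columns, the fresh column contributes at most $q^{-v}$ to the probability that $V_L$ crosses the level $v$, so after $d$ such contributions one gets $\Prob[V_L\geq v]\leq 1-(1-q^{-v})^d$, and summation yields $E(q,d)$. For the lower bound $\E[V_L]>E(q,d)-1/(q-1)$, the strategy is to upgrade the above conditional observation to full \emph{joint} independence of the $V^{(j)}$'s, which would give $\Prob[V_L\geq v]=1-\prod_n\Prob[F_n<v]$ (where $F_n$ denotes the conditional law when $n_j=n$), and then to bound this product by $(1-q^{-v})^d + q^{-v}$; summing the resulting defect $\sum_{v\geq 1} q^{-v}=1/(q-1)$ gives the claimed inequality. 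The main obstacle is precisely this joint independence claim, since different $V^{(j)}$'s visibly depend on overlapping first columns of $M$; overcoming it requires a careful choice of what to condition on at each step (typically the first $j-1$ columns are not the right sigma-algebra, whereas the first $j-1$ rows of $M$ are).

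\emph{Comparison with $\log_q d$.} Finally, since $f(t) = 1-(1-q^{-t})^d$ is decreasing in $t\geq 0$ and bounded by $1$, the sum $E(q,d)=\sum_{v\geq 1}f(v)$ differs from $\int_0^\infty f(t)\,dt$ by at most $1$. After the substitution $u = q^{-t}$, the integral becomes $(1/\log q)\int_0^1(1-(1-u)^d)/u\,du = H_d/\log q$, where $H_d$ is the $d$-th harmonic number. Combined with the standard estimate $H_d - \log d \in [\gamma,1]$, one obtains $E(q,d)-\log_q d \in [(\gamma/\log q)-1,\,1/\log q]$, whose absolute value is bounded by $1/\log 2$ for general $q\geq 2$ and by $1$ when $q\geq 3$.
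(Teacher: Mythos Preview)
Your integral comparison for $E(q,d)$ versus $\log_q d$ is correct and coincides with the paper's argument. The gap is in the first two parts: you reduce $V_L$ to $\max_j V^{(j)}$ with $V^{(j)}$ built from principal minors, compute the \emph{marginal} law of each $V^{(j)}$ correctly via your row-conditioning observation, but then need independence (or a workable substitute) to control the maximum. You acknowledge this yourself for the lower bound (``the main obstacle is precisely this joint independence claim''), and the same problem already bites in your upper bound: the event $\{V^{(j)}<v\}$ depends on rows $j,\ldots,d$, which also enter $V^{(1)},\ldots,V^{(j-1)}$ through the $N_{ij'}$ with $i\geq j$, so your row-conditioning does not decouple the events; switching to ``sequential conditioning along the columns'' does not help either, since conditionally on columns $1,\ldots,j$ the various determinants $\det M^{(j+1)}$ and $\det N_{i,j+1}$ are \emph{different} linear functionals of column $j{+}1$ and remain correlated. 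As written, neither inequality $\Prob[V_L\geq v]\leq 1-(1-q^{-v})^d$ nor its near-converse is actually established.

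The paper's device for breaking this dependence is different from anything in your sketch. It runs a column-pivoting procedure (Algorithm~\ref{algo:Vij}) that, at each step, may swap the current column with an earlier one before eliminating; the valuations $V_{i,j}$ recorded along the way are shown to be \emph{mutually independent} geometric variables (Proposition~\ref{prop:Xij}, Corollary~\ref{cor:Vij}). With $V=\max_i V_{i,i}$ one then has the deterministic sandwich $V - v_R(\det)\leq V_L\leq V$ (Proposition~\ref{prop:Vij}), so $\E[V_L]\leq \E[V]=E(q,d)$ is immediate, and the lower bound follows from Abdel-Ghaffar's formula $\E[v_R(\det)]=\sum_{i=1}^d (q^i-1)^{-1}<1/(q-1)$. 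In short, the missing idea is to manufacture genuinely independent variables via the swapping construction rather than to try to extract independence from the raw minors $V^{(j)}$.
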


\begin{theo}
\label{th:VLlaw}
For all positive real number $\ell$, we have:
$$\Prob\big[ {\textstyle |V_L - \log_q d - \frac 1 2| > 
\ell + \frac 1 2} \big] \leq 
\frac q{q-1} \cdot q^{-\ell} \cdot \Big( 2 + \ell \cdot \log 
q\Big).$$
\end{theo}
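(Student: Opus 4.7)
The plan is to split the deviation event $\{|V_L - \log_q d - \frac{1}{2}| > \ell + \frac{1}{2}\}$ into the upper tail $\{V_L > \log_q d + \ell + 1\}$ and the lower tail $\{V_L < \log_q d - \ell\}$, and bound each separately. Both bounds should ultimately rest on a sufficiently sharp distributional estimate for $V_L$ itself, which I expect to derive in a preliminary lemma controlling $\Prob[V_L \geq v]$ (and $\Prob[V_L \leq v]$) for every non-negative integer $v$.

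For the upper tail, the natural route is to bound $V_L$ in terms of the valuations of the pivots $U_{jj}$ of the LU decomposition. Since $U_{jj}$ is the ratio $\det M_{[j]}/\det M_{[j-1]}$ of leading principal minors of $M$, and since every entry of $L$ below the diagonal is a ratio whose numerator lies in $R$ and whose denominator divides some $\det M_{[j]}$, one obtains a deterministic inequality of the form $V_L \leq \max_{j} v(\det M_{[j]})$. Combining this with an Abdel-Ghaffar-type tail estimate $\Prob[v(\det M_{[j]}) \geq v] \leq C\cdot q^{-v}$ and a union bound over $j = 1, \ldots, d$, one gets $\Prob[V_L \geq v] \leq C d \cdot q^{-v}$; substituting $v \approx \log_q d + \ell + 1$ then produces the required $q^{-\ell}$ decay.

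For the lower tail, the analysis is more delicate because $\{V_L < v\}$ demands the simultaneous smallness of many non-independent quantities. A conditioning argument along the chain $M_{[1]} \subset M_{[2]} \subset \cdots \subset M_{[d]}$, exploiting that conditionally on $M_{[j-1]}$ the extra row and column extending $M_{[j-1]}$ to $M_{[j]}$ remain uniformly distributed, should yield a bound of the shape $\Prob[V_L \leq v] \lesssim q^{\,v - \log_q d}$ up to a polynomial correction in $v$. The factor $2 + \ell \log q$ appearing in the statement strongly suggests that this polynomial correction arises from converting a discrete geometric sum $\sum_v q^{-v}$ into a continuous expression, where $\log q = \ln q$ enters as a Jacobian-like factor (since $\frac{d}{dv}q^{-v} = -(\log q)\, q^{-v}$).

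The principal obstacle will be the lower tail: a naive union bound is insufficient because the valuations of successive pivots are coupled through the shared entries of $M$, and recovering the correct $q^{-\ell}$ rate requires a genuinely multidimensional probabilistic argument rather than just a one-dimensional Abdel-Ghaffar input. Once the sharp pointwise tail bounds for $\Prob[V_L \geq v]$ and $\Prob[V_L \leq v]$ have been established, the theorem itself should follow by elementary manipulations: evaluating the bounds at $v = \lceil \log_q d + \ell + 1\rceil$ and $v = \lfloor \log_q d - \ell\rfloor$ respectively, summing the two contributions, and absorbing the rounding errors into the $\frac{q}{q-1}$ prefactor.
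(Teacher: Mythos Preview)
Your upper-tail plan is essentially correct and close to the paper's. The paper obtains a slightly sharper version by first constructing, via a column-swapping variant of Gaussian elimination, \emph{independent} geometric random variables $V_1,\ldots,V_d$ and proving the sandwich inequality
\[
\max(V_1,\ldots,V_d) - v_R(\det) \;\leq\; V_L \;\leq\; \max(V_1,\ldots,V_d)
\]
(Proposition~\ref{prop:Vij}). The right-hand side gives $\Prob[V_L\geq x]\leq 1-(1-q^{-x})^d\leq d\,q^{-x}$, which is marginally cleaner than your Cramer-plus-Abdel-Ghaffar union bound over the principal minors, but yours would also suffice for this half.

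The lower tail is where your proposal has a genuine gap. Your ``conditioning along the chain $M_{[1]}\subset M_{[2]}\subset\cdots$'' is a hope, not a mechanism, and it is not what the paper does. The paper instead uses the \emph{left} half of the sandwich above: from $V_L\geq V-v_R(\det)$ with $V=\max_i V_i$, one writes for any auxiliary $t\geq 0$
\[
\Prob[V_L\leq x]\;\leq\;\Prob[V\leq x+t]+\Prob[v_R(\det)>t],
\]
bounds the first term by $(1-q^{-x-t})^d$ (independence of the $V_i$) and the second by Abdel-Ghaffar, and then \emph{optimizes over $t$}. This optimization yields $\Prob[V_L\leq x]\leq\lambda(1-\log\lambda)$ with $\lambda=\frac{q^{x+2}}{d(q-1)}$, and the factor $\ell\log q$ in the theorem is precisely the $-\log\lambda$ term evaluated at the relevant $x$; it does \emph{not} arise from any discrete-to-continuous Jacobian as you conjecture. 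Without the independent $V_i$'s (whose construction is the real content of \S\ref{subsec:Vij}), the sandwich inequality, and the optimization step, your outline has no concrete device for producing the $q^{-\ell}(2+\ell\log q)$ shape; a direct conditioning on growing principal submatrices does not obviously decouple the dependencies well enough to recover this rate.
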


\begin{cor}
\label{cor:VLdev}
The standard deviation of $V_L$ is bounded by an explicit universal 
constant (which can be chosen equal to $6.5$).
\end{cor}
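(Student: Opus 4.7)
The plan is to reduce bounding $\sigma(V_L)$ to computing a moment integral and to feed in the tail bound provided by Theorem \ref{th:VLlaw}. Concretely, set $W = V_L - \log_q d - \tfrac 1 2$. Since the variance is invariant under translation and is bounded by any centered second moment,
\begin{equation*}
\Var(V_L) \;=\; \Var(W) \;\leq\; \E[W^2] \;=\; \int_0^\infty 2t \cdot \Prob[\,|W| > t\,]\, dt.
\end{equation*}
I split the integral at $t = \tfrac 1 2$. On $[0,\tfrac 1 2]$, I use the trivial bound $\Prob[|W|>t] \leq 1$, which contributes exactly $\tfrac 1 4$. On $[\tfrac 1 2, \infty)$, I substitute $t = \ell + \tfrac 1 2$ with $\ell \geq 0$ and apply Theorem \ref{th:VLlaw} to obtain
\begin{equation*}
\int_{1/2}^\infty 2t \cdot \Prob[|W|>t]\, dt \;\leq\; \frac{q}{q-1} \int_0^\infty (2\ell+1)\, q^{-\ell} \bigl(2 + \ell \log q\bigr)\, d\ell.
\end{equation*}

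The remaining step is a routine evaluation. Expanding the integrand as a polynomial in $\ell$ and using the standard formula $\int_0^\infty \ell^n q^{-\ell} d\ell = n!/(\log q)^{n+1}$ for $n = 0, 1, 2$, the right-hand side becomes an explicit rational function of $\log q$ and $q/(q-1)$. One checks that this function is decreasing in $q$ (the factor $q/(q-1)$ is decreasing, and so is the contribution from $q^{-\ell}$ once $q \geq 2$), so the maximum over admissible $q$ is achieved at $q = 2$. Plugging in $q = 2$, $\log q = \log 2$, and $q/(q-1) = 2$, the integral evaluates to $(8 + 3\log 2)/(\log 2)^2$, which is numerically close to $20.96$. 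Combining with the $\tfrac 1 4$ from the first piece gives $\E[W^2] < 42.2$, and taking the square root yields $\sigma(V_L) < 6.5$.

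The only subtle point is making sure the monotonicity in $q$ is handled cleanly, so that one does not need a separate estimate for each $q$; once that is established, everything else reduces to a one-variable integral that is essentially a sum of gamma-like terms. There is no real difficulty beyond being careful with the substitution $\ell = t - \tfrac 1 2$ and verifying the arithmetic, which is why Corollary~\ref{cor:VLdev} is phrased merely as the existence of a universal constant (the value $6.5$ being slightly conservative to absorb any rounding).
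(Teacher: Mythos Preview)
Your argument is correct and follows essentially the same route as the paper: bound $\Var(V_L)$ by the second moment of $V_L$ about the center $\log_q d + \tfrac12$, write that moment as a tail integral, split at $t=\tfrac12$ (equivalently, the paper substitutes $x=(\ell+\tfrac12)^2$), feed in Theorem~\ref{th:VLlaw}, and evaluate to obtain $\tfrac14 + \tfrac{q}{q-1}\bigl(\tfrac{3}{\log q}+\tfrac{8}{\log^2 q}\bigr)$, which is maximized at $q=2$ and yields $\sigma(V_L)<6.5$. The only cosmetic difference is the form of the layer-cake identity you use; the computation and the monotonicity-in-$q$ step are identical.
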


\subsection{Some useful tools}

This subsection gathers some preliminaries to the proof of Theorem 
\ref{th:VLexp}, Theorem \ref{th:VLlaw} and Corollary \ref{cor:VLdev}. We 
first recall some basic facts about LU decomposition, then introduce the 
random variables $V_{i,j}$'s (which will play a crucial role in the 
sequel) and finally prove several important properties of them.

\subsubsection{Cramer's rule for LU decomposition}

Let $M \in M_d(K)$. A useful formula for our purpose is an analogue of 
Cramer's rule which gives a closed expression of the entries of $L(M)$ 
as a quotient of two determinants. This formula appears for instance in 
\cite{householder}, \S 1.4; let us recall it briefly. If $I$ and $J$ are 
two subsets of $\{1, \ldots, d\}$, we denote by $M_{I,J}$ the submatrix 
of $M$ obtained by deleting all columns and rows whose index are not in 
$I$ and $J$ respectively. The $i$-th principal minor is then the 
determinant of the matrix $M_{I,I}$ where $I = \{1, \ldots, i\}$; we 
will denote it by $\delta_i(M)$. With these notations, we have:
\begin{eqnarray}
\text{if } i > j & : &
L(M)_{i,j} = \frac{\det M_{I,J}}{\delta_j(M)}
\quad \text{where } I = \{1, \ldots, j-1, i\} \text{ and }
J = \{1, \ldots, j\} \label{eq:CramerL} \\
\text{if } i \leq j & : &
U(M)_{i,j} = \frac{\det M_{I,J}}{\delta_{i-1}(M)}
\quad \text{where } I = \{1, \ldots, i\} \text{ and }
J = \{1, \ldots, i-1, j\}. \label{eq:CramerU}
\end{eqnarray}
The proof of these formulas is not difficult. For Formula 
\eqref{eq:CramerL}, note that $L(M)_{I,J} \cdot U(M)_{J,J} = M_{I,J}$ 
provided that $J$ has the particular shape $J = \{1, \ldots, j\}$; then, 
passing to the determinant, we get $\det L(M)_{I,J} \cdot \det 
U(M)_{J,J} = \det M_{I,J}$ and the desired relation follows by combining 
these equalities for $I = J$ and $I = \{1, \ldots, j-1, i\}$. The proof 
of Formula \eqref{eq:CramerU} is similar.

%The above formulas show that the valuation of the principal minors
%will play an important role in the sequel. Let $V_i : \Omega \to
%\N \cup \{\infty\}$ denote the random variable ``valuation of the
%$i$-th principal minor'', that is $V_i(\omega) = v_R(\delta_i(
%\omega))$ for all $\omega \in \Omega$. By chance, the law of $V_i$ has 
%been already determined by Abdel-Ghaffar \cite{abdel} (in the case of 
%power series ring) and Evans \cite{evans}: they proved $\Prob[V_i \leq 
%v] = (1 - q^{-v-1}) (1 - q^{-v-2}) \cdots (1 - q^{-v-i})$. For the 
%sequel, we will just remember that:
%\begin{equation}
%\label{eq:abdel}
%\Prob[V_i \leq v] \geq 1 - \frac{q^{-v}}{q-1}.
%\end{equation}

\subsubsection{The random variables $V_{i,j}$}
\label{subsec:Vij}

The aim of this paragraph is to define a collection of mututally 
independent random variables $V_{i,j} : \Omega \to \N \cup \{\infty\}$ 
($1 \leq i \leq j \leq d$); they will be very useful in the sequel to 
study $V_L$.
We first construct a collection of random variables $X_{i,j} : 
\Omega \to R$ ($1 \leq i \leq j \leq d$). The construction goes by 
induction on $j$. We start with a matrix $\omega$ in $\Omega$. We first
define $X_{1,1}(\omega)$ to be the top left entry of $\omega$. We then 
enter in the second round (\emph{i.e.} $j = 2$). As before, we begin by 
letting $X_{1,2}(\omega)$ denote the $(1,2)$-th entry of $\omega$ but, 
before defining $W_{2,2}(\omega)$ we do the two following modifications 
on the matrix $\omega$:
\begin{itemize}
\item if the valuation of $X_{1,2}(\omega)$ is less than the valuation
of $X_{1,1}(\omega)$, we swap the two first columns of $\omega$ and, 
then
\item we clear the $(1,2)$-th entry of $\omega$ by adding to its
second column a suitable multiple of its first column (note that
it is always possible because if the top left entry --- which serves
as pivot --- vanishes, so does the $(1,2)$-th entry).
\end{itemize}
Doing these operations, the coefficient of $\Omega$ in position 
$(2,2)$ may have changed and we define $X_{2,2}(\omega)$ to be the
\emph{new} $(2,2)$-th entry of $\omega$.
The general induction step works along the same ideas. Assume that,
after the $(j-1)$-th step, we have ended up with a matrix $\omega$ such 
that $\omega_{i',j'} = 0$ when $i' < j' < j$. We define $X_{i,j}
(\omega)$ by induction on $i$ by applying the following process 
successively for $i = 1, 2, \ldots, j-1$:
\begin{itemize}
\item first, we set $X_{i,j}(\omega)$ to the $(i,j)$-th entry of (the 
current) $\omega$;
\item second, if the valuation of $X_{i,j}(\omega)$ is less than the 
valuation of the $(i,i)$-th entry of (the current) $\omega$, we swap the 
first row of $\omega$ with its $i$-th one;
\item third, we clear the $(i,j)$-th entry of $\omega$ by adding to
its $j$-th column a suitable multiple of its first column.
\end{itemize}
We finally let $X_{j,j}(\omega)$ denote the $j$-th diagonal entry of (the
current) $\omega$. For all $(i,j)$ with $1 \leq i \leq j \leq d$, we 
also set $V_{i,j} = v_R(X_{i,j})$ and $V_i = V_{i,i}$. The $V_{i,j}$'s 
take values in $\N \cup \{\infty\}$ and they are finite almost
everywhere. 
Algorithm \ref{algo:Vij} summarizes the construction of the $V_{i,j}$'s.

\begin{algorithm}
  \SetKwInOut{Nota}{\it Notation:}
  \SetKwInOut{Void}{}

  \Nota{$\star\,$ $\omega_{i,j}$ denotes the $(i,j)$-th entry of $\omega$}
  \Void{$\star\,$ $\omega_j$ denotes the $j$-th row of $\omega$}

  \BlankLine

  \For {$j$ from $1$ to $d$}{
    \For {$i$ from $1$ to $j-1$}{
      $V_{i,j}$ $\leftarrow$ $v_R(\omega_{i,j})$\;
      \lIf {$v_R(\omega_{i,j}) < v_R(\omega_{i,i})$}
        {swap $\omega_j$ and $\omega_i$\;}
      \lIf {$\omega_{i,i} \neq 0$}
        {$\omega_j$ $\leftarrow$ 
        $\omega_j - \frac{\omega_{i,j}}{\omega_{i,i}} \cdot \omega_i$\;}
    }
    $V_{j,j}$ $\leftarrow$ $v_R(\omega_{j,j})$\;
  }
\caption{The construction of the random variables $V_{i,j}$'s\label{algo:Vij}}
\end{algorithm}

\begin{prop}
\label{prop:Xij}
The random variables $X_{i,j}$ ($1 \leq i \leq j \leq d$) are
uniformely distributed and mutually independent.
\end{prop}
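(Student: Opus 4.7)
My plan is to recognize Algorithm \ref{algo:Vij} as giving rise to a measure-preserving bijection $\Phi : \Omega \to \Omega$ (up to a negligible set) for the Haar measure $\mu$, with the $X_{i,j}$'s appearing as half of the coordinates of $\Phi(\omega)$. Since $\mu$ is the $d^2$-fold product of the Haar measure on $R$ along the coordinates of $\Omega = M_d(R)$, the coordinates of $\Phi(\omega) \sim \Phi_\ast \mu = \mu$ are iid uniform on $R$; in particular the $X_{i,j}$'s are, which is exactly the content of the proposition.

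The first step is to check that each elementary operation of the algorithm preserves $\mu$. A swap is a permutation of the $d^2$ coordinates of $\Omega$ and so trivially preserves $\mu$. The linear update $\omega_j \leftarrow \omega_j - (\omega_{i,j}/\omega_{i,i}) \omega_i$ preserves $\mu$ as well: fixing every row of $\omega$ other than $\omega_j$ itself, the scalar $\omega_{i,j}/\omega_{i,i}$ becomes a constant of $R$, so the update reduces to a translation of $\omega_j \in R^d$ preserving Haar on $R^d$; Fubini's theorem then lifts this to $\mu$-preservation on the whole $\Omega$.

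The second step is to define $\Phi$ and to check that it is a bijection. I would take $\Phi(\omega)$ to be the matrix formed by placing the recorded values $X_{i,j}$ (for $1 \leq i \leq j \leq d$) in the on-or-above-diagonal positions, and the $d(d-1)/2$ entries of the final state of $\omega$ after the algorithm has terminated in the strictly-below-diagonal positions. This map is invertible on a subset of full measure: from $\Phi(\omega)$ one reads off the successive $X_{i,j}$'s and the information stored below the diagonal, and the swap decisions at each round can be reconstructed from the valuation comparisons already encoded in $\Phi(\omega)$, allowing one to replay the algorithm in reverse. Since $\Phi$ is then a composition of the measure-preserving elementary steps above, one has $\Phi_\ast \mu = \mu$, and the proposition follows at once.

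The main obstacle is the bookkeeping needed for the invertibility claim: a swap performed in round $j$ can disturb rows already "finalised" in earlier rounds, so one has to carry out an inductive unwinding in order to verify that the reverse algorithm remains consistent and deterministic almost surely. Once this point is settled, the measure-preservation of $\Phi$ follows immediately from the elementary-step analysis.
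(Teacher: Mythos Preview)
Your overall strategy --- encode the $X_{i,j}$'s together with $d(d-1)/2$ auxiliary coordinates into a measure-preserving bijection $\Phi$ and read off the conclusion from $\Phi_\ast\mu=\mu$ --- is exactly what the paper does. The paper organizes it fiber-by-fiber: for each tuple $x=(x_{i,j})$ it parametrizes the fiber $\Omega(x)=\{X_{i,j}=x_{i,j}\}$ by $d(d-1)/2$ entries of the \emph{original} $\omega$, taken in positions $I_x$ that depend on the valuations of $x$, and then globalizes over the open cells on which $I_x$ is constant. Your choice of auxiliary coordinates (the strictly lower part of the \emph{final} state) is a legitimate alternative parametrization; it has the cosmetic advantage that the positions are fixed, at the cost of making the invertibility check a bit heavier.

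There is, however, a genuine error in your measure-preservation step. Despite the misleading comment in the notation block of Algorithm~\ref{algo:Vij}, $\omega_j$ is the $j$-th \emph{column} and the clearing step is a column operation (this is explicit in the text preceding the algorithm). Hence the scalar $\omega_{i,j}/\omega_{i,i}$ involves $\omega_{i,j}$, which sits in the very column being modified: it is \emph{not} constant when the other columns are frozen, and the map $\omega_j\mapsto\omega_j-(\omega_{i,j}/\omega_{i,i})\,\omega_i$ is not a translation of $R^d$ --- it sends the $i$-th coordinate to $0$ and is not even injective, so it cannot preserve Haar on its own. What \emph{is} measure-preserving is the combined ``record $X_{i,j}$, then clear'' step: you keep $\omega_{i,j}$ as the output $X_{i,j}$ and shear the remaining $d-1$ entries of column $j$ by a multiple of column $i$ that is now genuinely constant. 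With this correction your $\Phi$ is, on each cell determined by the pattern of valuation comparisons, a composition of coordinate swaps and shears, and hence Haar-preserving. The inductive unwinding you flag as the ``main obstacle'' is real but routine; it is precisely the bookkeeping the paper hides behind ``one can check that $f_x$ is a bijection''.
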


\begin{proof}
Set $I = \{ \, (i,j) \, | \, 1 \leq i \leq j \leq d \, \}$. Suppose we 
are given a family $x = (x_{i,j})_{(i,j) \in I}$ of elements of $R$.
We consider the following set:
$$\Omega(x) = \big\{ \,\, \omega \in \Omega \,\, | \,\,
X_{i,j}(\omega) = x_{i,j}, \, \forall (i,j) \in I \,\, \big\}.$$
Set $v_{i,j} = v_R(x_{i,j})$ and for all $i$, let $j_i$ denote the
first index in $\{1, \ldots, d+1-i\}$ such that $v_{i,i-1+j_i}$ is 
equal to $\min(v_{i,i}, v_{i,i+1}, \ldots, v_{i,d})$. This sequence 
of integers $(j_i)$ is the code of a certain permutation $\sigma$ of 
$\{1, \ldots, d\}$ defined by the following rule. We write all the 
integers between $1$ and $d$. We define $\sigma(1)$ to be the 
$j_1$-th written integer (that is $j_1$) and we erase it. We then
define $\sigma(2)$ to be the $j_2$-th integer which remains written
(that is $j_2$ is $j_2 < j_1$ and $j_2+1$ otherwise), we erase it 
and we continue. Let $I_x$ denote the subset of $\{1, \ldots, d\}^2$
consisting of couples $(i,j)$ such that $i > \sigma^{-1}(j)$. One
can check that it has cardinality $\frac{d(d-1)} 2$. Consider the 
function $f_x : \Omega(x) \to R^{I_x}$ mapping $\omega$ to the family 
$(\omega_{i,j})_{(i,j) \in I_x}$. Following the construction of the
$X_{i,j}$'s, one can check that $f_x$ is a bijection.

Now, we globalize the previous construction. Let $U$ be a subset of 
$R^I$ containing a distinguished element $x$ and such that $I_y = I_x$ 
for all $y \in U$. With this assumption, the collection of functions 
$f_y$'s ($y$ varying in $U$) defines a bijection between $\Omega(U) 
= \{ \, \omega \in \Omega \, | \, (X_{i,j}(\omega))_{(i,j) \in I} \in U 
\, \}$ and $U \times R^{I_x}$. It is morever easy to check that this 
bijection preserves the measure; in other words
\begin{equation}
\label{eq:XijU}
\Prob[(X_{i,j})_{(i,j) \in I} \in U] = \mu(U)
\end{equation}
where $\mu$ denotes the Haar measure on $R^I$. But, since the function 
$v_R$ is locally constant on $R \backslash \{0\}$, any open subset $U
\subset (R \backslash \{0\})^I$ can be written as a disjoint union of
subsets $U'$ on which the function $y \mapsto I_y$ is constant. 
Therefore the equality \eqref{eq:XijU} holds for all these $U$. Since 
furthemore the complement of $(R \backslash \{0\})^I$ in $R^I$ is a 
measure-zero set, the equality \eqref{eq:XijU} holds for all open
subset $U$ of $R^I$.
\end{proof}

\begin{cor}
\label{cor:Vij}
The random variables $V_{i,j}$ ($1 \leq i \leq j \leq d$) are
mutually independent and they all follow a geometric law of parameter 
$1 - q^{-1}$ (\emph{i.e.} they take value $v$ with probability $(1-q) 
q^{v-1}$).
\end{cor}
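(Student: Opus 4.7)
The plan is to deduce the corollary directly from Proposition \ref{prop:Xij} together with a short computation of the pushforward of Haar measure under the valuation map.

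First, I would observe that by construction $V_{i,j} = v_R(X_{i,j})$ depends only on $X_{i,j}$. Since Proposition \ref{prop:Xij} establishes that the random variables $X_{i,j}$ (for $1 \leq i \leq j \leq d$) are mutually independent, their respective images under the measurable map $v_R : R \to \N \cup \{\infty\}$ are mutually independent as well. This takes care of the independence part for free, with no further work.

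Next, I would compute the marginal law of a single $V_{i,j}$. For each integer $v \geq 0$, the set $\{x \in R \mid v_R(x) \geq v\} = \pi^v R$ is an open additive subgroup of $R$ of index $q^v$, so its Haar measure (normalized so that $\mu(R) = 1$) equals $q^{-v}$. Since $X_{i,j}$ is Haar-uniform on $R$ by Proposition \ref{prop:Xij}, this yields
\[
\Prob[V_{i,j} \geq v] = q^{-v}, \quad \text{hence} \quad \Prob[V_{i,j} = v] = q^{-v} - q^{-v-1} = (1 - q^{-1}) \cdot q^{-v},
\]
which is precisely the geometric law of parameter $1 - q^{-1}$ advertised in the statement (the expression in the corollary should read $(1-q^{-1}) q^{-v}$).

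There is really no obstacle here: once Proposition \ref{prop:Xij} is in hand, the corollary is essentially a one-line computation of $\mu(\pi^v R) = q^{-v}$ combined with the fact that independence is preserved by componentwise measurable transformations. The only thing worth flagging explicitly is the normalization of Haar measure and the choice of the uniformizer $\pi$, which are already fixed in the paper's preliminaries.
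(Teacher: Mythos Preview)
Your proposal is correct and matches the paper's approach exactly: the paper's own proof is simply ``Clear after Proposition \ref{prop:Xij},'' and you have spelled out precisely the one-line computation that this sentence stands for. Your remark about the typo in the displayed probability is also apt.
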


\begin{proof}
Clear after Proposition \ref{prop:Xij}.
\end{proof}

Another interest of the $V_{i,j}$'s is that they are closely related to 
$V_L$. The following Proposition precises this relationship.

\begin{prop}
\label{prop:Vij}
We have $\max (V_1, V_2, \ldots, V_d) - v_R(\det) \leq V_L \leq \max 
(V_1, \ldots, V_{d-1})$ (recall that $V_i = V_{i,i}$ by definition).
\end{prop}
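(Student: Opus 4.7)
My plan is to leverage the structure of the matrix $\omega^{(j)}$ obtained at the end of the outer iteration $j$ of Algorithm~\ref{algo:Vij}, and to combine it with the Cramer-type formula \eqref{eq:CramerL}. First, I would observe that every operation performed in steps $1, \ldots, j$ is a column swap or an elementary column replacement involving indices in $\{1, \ldots, j\}$; consequently $\omega^{(j)} = \omega \cdot Q^{(j)}$ with $Q^{(j)}$ of block-diagonal form $\bigl(\begin{smallmatrix} Q_{11}^{(j)} & 0 \\ 0 & I_{d-j} \end{smallmatrix}\bigr)$ and $\det Q_{11}^{(j)} = \pm 1$. The swap criterion in the algorithm further guarantees that each multiplier $\omega_{i,j}/\omega_{i,i}$ that appears in an elimination lies in $R$, so $\omega^{(j)} \in M_d(R)$; moreover $\omega^{(j)}_{i',j'} = 0$ whenever $i' < j' \leq j$, and $\omega^{(j)}_{j,j} = X_{j,j}$ by construction.

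Next, I would apply \eqref{eq:CramerL} with $I = \{1, \ldots, j-1, i\}$, $J = \{1, \ldots, j\}$ and $i > j$. Writing $\omega^{(j)}_{I,J} = M_{I,J} \cdot Q_{11}^{(j)}$, expanding $\det \omega^{(j)}_{I,J}$ along its last column (whose only possibly nonzero entry is $\omega^{(j)}_{i,j}$), and combining with $\delta_j(M) = \pm\prod_{l=1}^{j} \omega^{(j)}_{l,l}$, a short cancellation gives the master identity
\begin{equation*}
v_R\bigl(L(M)_{i,j}\bigr) \;=\; v_R\bigl(\omega^{(j)}_{i,j}\bigr) - V_j.
\end{equation*}
Since $\omega^{(j)}_{i,j} \in R$ and $L(M)_{i,i} = 1$, this immediately yields the upper bound $V_L \leq \max(V_1, \ldots, V_{d-1})$.

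For the lower bound, the case $k = d$ is immediate: the lower-triangular shape of $\omega^{(d)}$ gives $v_R(\det M) = \sum_{l=1}^d v_R(\omega^{(d)}_{l,l}) \geq V_d$, hence $V_d - v_R(\det M) \leq 0 \leq V_L$. For $1 \leq k < d$, I set $m_k = \min_{i > k} v_R(\omega^{(k)}_{i,k})$ and expand $\pm\det M = \det \omega^{(k)}$ along the $k$-th column of $\omega^{(k)}$: only rows $i \geq k$ contribute and every cofactor lies in $R$, so the ultrametric inequality produces $v_R(\det M) \geq \min(V_k, m_k)$. If $V_k \leq m_k$ then $V_k - v_R(\det M) \leq 0 \leq V_L$ is trivial; otherwise $m_k \leq v_R(\det M)$ and the master identity with $j = k$ and $i$ realising $m_k$ gives $V_L \geq V_k - m_k \geq V_k - v_R(\det M)$.

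The main difficulty is concentrated in the preparatory step: one must carefully track the effect of the column swaps to verify both that the multipliers remain in $R$ and that $Q^{(j)}$ really does have the advertised block-diagonal form, so that $\det M_{I,J}$ genuinely factorises through the lower-triangular block of $\omega^{(j)}$. Once this structural picture is in place, both estimates reduce to Cramer's rule and a single cofactor expansion.
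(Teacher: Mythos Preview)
Your proposal is correct and follows essentially the same route as the paper: both arguments reduce to the key identity $v_R(L(\omega)_{i,j}) = v_R(\omega^{(j)}_{i,j}) - V_j$ obtained by combining Cramer's formula \eqref{eq:CramerL} with the fact that $\omega^{(j)}$ differs from $\omega$ only by unimodular column operations on the first $j$ columns. Your treatment is somewhat more explicit than the paper's—you spell out the block form of $Q^{(j)}$, the cofactor expansion along column $k$ that justifies the existence of an index $i$ with $v_R(\omega^{(k)}_{i,k}) \leq v_R(\det \omega)$, and you separate the case $k=d$—but these are exactly the details the paper leaves implicit in its compressed sentence ``Thus there must exist an index $i$\ldots''.
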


\begin{proof}
Let $\omega \in \Omega$. To avoid confusion, agree to call $T_j(\omega)$ 
the matrix $\omega$ computed by Algorithm \ref{algo:Vij} (run with $\omega$ 
as input) after the $j$-th iteration of the main loop and reserv the 
notation $\omega$ for the matrix we have started with. It follows from the 
construction that $T_j(\omega)$ has the following particular shape: if 
$i' < j' \leq j$, then the $(i',j')$-th entry of $\omega_j$ vanishes.
Moreover, clearly, $T_j(\omega)$ is obtained from 
$\omega$ by performing successive elementary operations on the first $j$ 
columns. Therefore, if $J = \{1, \ldots, j\}$ and if $I$ is a subset of 
$\{1, \ldots, d\}$ of cardinality $J$, we have $\det \omega_{I,J} = \pm 
\det (\omega_j)_{I,J}$. In particular these two determinants have the 
same valuation. Fix a couple $(i,j)$ such that $1 \leq j \leq
i \leq d$ and set $I = \{1, \ldots, j-1, i\}$, $J = \{1, \ldots, j\}$.
From Formula \eqref{eq:CramerL} and what we have said before, we derive:
\begin{eqnarray*}
v_R(L(\omega)_{i,j}) & = & 
v_R(\det T_j(\omega)_{I,J}) - v_R(\det T_j(\omega)_{J,J}) \\
& = & 
v_R(T_j(\omega)_{i,j}) - v_R(T_j(\omega)_{j,j}) =
v_R(T_j(\omega)_{i,j}) - V_j(\omega).
\end{eqnarray*}
Since all coefficients of $\omega_j$ lie in $R$, so does its
determinant. It follows that $v_R(T_j(\omega)_{i,j}) \geq 0$ and
consequently that $v_R(L(\omega)_{i,j}) \geq - V_j(\omega)$, which
proves the second inequality. To establish the first one, note that
$\omega$ and $T_j(\omega)$ share the same determinant up to a sign. Thus
there must exist an index $i$, necessarily not less than $j$, such 
that $v_R(T_j(\omega)_{i,j}) \leq v_R(\det \omega)$. 
For this particular $i$, we have $v_R(L(\omega)_{i,j}) \leq v_R(\det 
\omega) - V_j(\omega)$ and then $V_L(\omega) \geq V_j(\omega) - 
v_R(\det \omega)$. The conclusion follows.
\end{proof}

\begin{rem}
\label{rem:det}
In the same way, we can prove that the valuation of the $i$-th minor
of $\omega \in \Omega$ is equal to
$\sum_{i=1}^j \min(V_{i,i}(\omega), V_{i,i+1}(\omega), \ldots, 
V_{i,j}(\omega))$.
Combining this with Corollary \ref{cor:Vij}, one can easily recover 
Abdel-Ghaffar's formula $\sum_{i=1}^d \frac 1 {q^i-1}$ (see Theorem 3
of \cite{abdel}) giving the expected value of the random variable 
``valuation of the determinant''.
\end{rem}

\subsection{Proof of the main results}
\label{subsec:proofstats}

\subsubsection{Estimation of the expected value}
\label{subsec:expvalue}

This subsection is devoted to the proof of Theorem \ref{th:VLexp}.

\paragraph{Estimation of the expected value of $V_L$}

Let $V = \max(V_1, V_2, \ldots, V_d)$. The event ``$V < v$'' occurs 
if and only if $V_{i,i} < v$ for all index $i$, and Corollary 
\ref{cor:Vij} shows that it happens with probability $(1 - q^{-v})^d$. 
The expected value of $V$ is then equal to $\sum_{v=1}^\infty \Prob[V 
\geq v] = \sum_{v=1}^\infty \big[ 1 - (1 - q^{-v})^d \big]$ that is 
exactly $E(q,v)$. On the other hand, Proposition \ref{prop:Vij} implies 
that $\E[V] - \E[v_R(\det)] \leq \E[V_L] \leq \E[V]$. Moreover, by
Abdel-Ghaffar's Theorem, we know that the expected value of $v_R(\det)$
is given by $\sum_{i=1}^d \frac 1{q^i - 1}$ and hence is less that
$\sum_{i=1}^d \frac 1{q^i} < \sum_{i=1}^\infty \frac 1{q^i} = \frac 
1{q-1}$. The first part of Theorem \ref{th:VLexp} is proved.

\paragraph{Estimation of $E(q,v)$}

Consider the function $f : x \mapsto 1 - (1-q^{-x})^d$. It is decreasing
on the interval $[0, \infty)$ and therefore one can write:
$$\int_0^\infty f(x) dx \geq E(q,d) \geq 
\int_1^\infty f(x) dx \geq -1 + \int_0^\infty f(x) dx.$$
Doing the substitution $y = 1-q^{-x}$, we get:
\begin{eqnarray*}
\int_0^\infty f(x) dx = \frac 1 {\log q} \cdot \int_0^1 
\frac{1-y^d}{1-y} \: dy = 
\frac 1 {\log q} \cdot \int_0^1 (1 + y + y^2 + \cdots + y^{d-1}) dy
= \frac {H_d} {\log q}
\end{eqnarray*}
where $H_d = 1 + \frac 1 2 + \cdots + \frac 1 d$ is the harmonic series. 
It is well known that $\gamma + \log d \leq H_d \leq 1 + \log d$ where 
$\gamma$ is the Euler's constant. Therefore $E(q,d)$ is almost equal to 
$\log_q d$, the error term being bounded by a universal constant. The 
second part of Theorem \ref{th:VLexp} follows.

\paragraph{Some additional remarks}

We would like first to emphasize that the difference $E(q,d) - \log_q d$ 
does \emph{not} converge to $0$ when $q$ and/or $d$ goes to infinity. 
Indeed the following Lemma shows that, when $\log_q d$ is far from an 
integer and $q$ is large, $E(q,d)$ might be closer to the integral part 
of $\log_q d$ than to $\log_q d$ itself.

\begin{lemma}
For all $q$ qnd $d$, 
$$\big| E(q,d) - [\log_q d] \big| < \frac q{q-1} \cdot
q^{-\dist(\log_q d, \, \N)}$$
where $[\log_q d]$ and $\dist(\log_q d, \N)$ denotes respectively
the integral part and the distance to $\N$ of $\log_q d$.
\end{lemma}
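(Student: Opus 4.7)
The plan is to write $\log_q d = n + \theta$ with $n = [\log_q d]$ and $\theta \in [0,1)$, so that $d = q^{n+\theta}$ and $\dist(\log_q d, \N) = \min(\theta, 1-\theta)$. Splitting the series defining $E(q,d)$ at $v = n$ produces $E(q,d) - n = B - A$, where $A = \sum_{v=1}^n (1-q^{-v})^d$ and $B = \sum_{v > n}\bigl[1 - (1-q^{-v})^d\bigr]$ are both nonnegative. It therefore suffices to bound each of $A$ and $B$ separately by $\frac{q}{q-1} q^{-\dist(\log_q d, \N)}$, for then $|E(q,d) - n| \leq \max(A, B)$.

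For $B$, Bernoulli's inequality $1 - (1-x)^d \leq dx$ gives $B \leq d \sum_{v > n} q^{-v} = q^\theta/(q-1) = \frac{q}{q-1} q^{-(1-\theta)}$, which is bounded by $\frac{q}{q-1} q^{-\dist(\log_q d, \N)}$ because $1-\theta \geq \min(\theta, 1-\theta)$.

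For $A$, applying $(1-x)^d \leq e^{-dx}$ and re-indexing by $j = n - v$ yields $A \leq \sum_{j=0}^\infty e^{-q^{\theta+j}}$. Since $q^{\theta+j} \geq 1$, consecutive terms have ratio at most $e^{-(q-1)}$, so this doubly-exponentially decaying series is bounded by its first term times a geometric factor: $A \leq e^{-q^\theta}/(1-e^{-(q-1)})$. Two elementary inequalities then complete the estimate: (i) $e^{q-1} \geq q$, which rearranges to $(1-e^{-(q-1)})^{-1} \leq q/(q-1)$, and (ii) the convexity bound $q^\theta = e^{\theta \log q} \geq 1 + \theta \log q > \theta \log q \geq \dist(\log_q d, \N) \cdot \log q$ (the last step using $\min(\theta, 1-\theta) \leq \theta$), which gives $e^{-q^\theta} < q^{-\dist(\log_q d, \N)}$. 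Together these produce $A < \frac{q}{q-1} q^{-\dist(\log_q d, \N)}$.

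The main obstacle is making the constant $q/(q-1)$ come out exactly right in the estimate on $A$; the key trick is recognizing that the elementary inequality $e^{q-1} \geq q$ is precisely what converts the geometric bound on $\sum_j e^{-q^{\theta+j}}$ into the form that the lemma requires.
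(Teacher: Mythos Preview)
Your argument is correct. The treatment of $B$ (the tail $v>n$) is essentially identical to the paper's: both use Bernoulli's inequality $1-(1-q^{-v})^d \leq d q^{-v}$ and sum the resulting geometric series to obtain $B \leq \frac{q}{q-1}\, q^{-(1-\theta)}$.

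The difference lies in how you control $A = \sum_{v=1}^n (1-q^{-v})^d$. The paper proves directly, via the AM--GM inequality applied to the $d+1$ numbers $dq^{-v}, 1-q^{-v}, \ldots, 1-q^{-v}$, that $(1-q^{-v})^d \leq q^v/d$; summing this for $v\leq n$ gives $A \leq \frac{q}{q-1}\cdot \frac{q^n}{d} = \frac{q}{q-1}\, q^{-\theta}$ in one stroke. You instead pass through the exponential bound $(1-x)^d \leq e^{-dx}$, obtain the doubly-exponential sum $\sum_j e^{-q^{\theta+j}}$, and then need two further elementary inequalities ($e^{q-1}\geq q$ and $q^\theta > \theta\log q$) to massage the constant into the required $\frac{q}{q-1}$. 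Both routes land on the same final bound, but the paper's AM--GM step is shorter and yields the geometric sum and the constant $\frac{q}{q-1}$ without any auxiliary manipulations. Your route, on the other hand, actually produces the sharper intermediate bound $A \leq \frac{q}{q-1}\, e^{-q^\theta}$, which decays much faster in $\theta$ than the paper's $\frac{q}{q-1}\, q^{-\theta}$ --- so if one ever wanted a tighter estimate when $\theta$ is not small, your approach gives more.
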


\begin{proof}
We claim that the function $f : x \mapsto 1 - (1-q^{-x})^d$ satisfies:
\begin{equation}
\label{eq:encadf}
1 - \frac {q^x} d \leq f(x) \leq \frac d {q^x}, \quad
\text{for all } x \geq 0.
\end{equation}
Indeed, the second inequality directly comes from the standard 
inequality $(1+t)^d \leq 1 + td$ whereas the first one is a consequence 
of AM-GM inequality applied with the numbers $d q^{-x}$ and $1-q^{-x}, 
1-q^{-x}, \ldots, 1-q^{-x}$ ($d$ times). If $v_0 = [\log_q d]$, we then 
get $v_0 + \sum_{v=1}^{v_0} \frac {q^v} d \leq E(q,d) \leq v_0 + 
\sum_{v=v_0+1} ^\infty \frac d {q^v}$, which gives:
$$- \frac q {q-1} \cdot \frac{q^{v_0}} d \leq E(q,d) - v_0 \leq
\frac q {q-1} \cdot \frac d {q^{v_0+1}}.$$
The Lemma follows from this.
\end{proof}

Let us end this paragraph by a last remark: the sum $E(q,d)$ can also
be exactly computed. Indeed, we have:
$$E(q,d) = \sum_{v=1}^\infty 1 - (1-q^{-v})^d = \sum_{v=1}^\infty 
\sum_{k=1}^d (-1)^{k-1} \binom d k q^{-vk} = \sum_{k=1}^d (-1)^{k-1} 
\binom d k \cdot \frac 1 {q^k-1}.$$
Nevertheless, this expression does not yield the order of magnitude of 
$E(q,d)$; indeed, each term in the latter sum (the one over $k$) can 
individualy be very large whereas the sum itself grows rather slowly.

\subsubsection{Estimation of the law of $V_L$}

We now start the proof of Theorem \ref{th:VLlaw}.
The strategy is quite clear: we use Corollary \ref{cor:Vij} and 
Proposition \ref{prop:Vij} to bound from below and from above the 
distribution function of $V_L$.
First, let us investigate the consequences of the inequality $V_L \leq 
V$ (where we recall that we have set $V = \max(V_1, \ldots, V_d)$). For 
all (nonnegative) real number $x$, it implies that:
\begin{equation}
\label{eq:upbdist}
\Prob[V_L < x] \geq \Prob[V < x] = \prod_{i=1}^d \Prob[V_i <
x] \geq (1 - q^{-x})^d \geq 1 - d \cdot q^{-x}.
\end{equation}
It is a bit more tricky to use the other inequality $V_L \geq V - 
v_R(\det)$ because $v_R(\det)$ and the $V_i$'s are certainly not 
independent (\emph{cf} Remark \ref{rem:det}). Nevertheless, one can 
pick two nonnegative real numbers $x$ and $t$ and consider the
event $E_{x,t}$ : ``$V > x+t$ and $v_R(\det) \leq t$''. It is clear 
that $V - v_R(\det)$ is always greater than $x$ when $E_{x,t}$ occurs. 
Thus we have:
\begin{equation}
\label{eq:lowbdisti}
\Prob[V_L \leq x] \leq \Prob[V - v_R(\det) \leq x] \leq 
\Prob[E_{x,t}] \leq 1 - \Prob[V \leq x+t] - \Prob[v_R(\det) > t].
\end{equation}
Moreover we know that $\Prob[V \leq x+t] \leq (1 - q^{-x-t})^d$ and
from Abdel-Ghaffar's result (see \cite{abdel}), we derive $\Prob[V_d > 
t] \leq \frac {q^{-t+2}} {q-1}$. Indeed, Abdel-Ghaffar Theorem states 
that for all integer $v$, the equality $\Prob[v_R(\det) \leq v] = (1 - 
q^{-v-1}) (1 - q^{-v-2}) \cdots (1 - q^{-v-d})$ holds. In particular 
$\Prob[v_R(\det) \leq v] \geq 1 - \sum_{i=1}^d q^{-v-i} \geq 1 - 
\frac{q^{-v}}{q-1}$. Taking $v = [t]$, we get the claimed result. 
Putting these inputs in \eqref{eq:lowbdisti}, we obtain:
$$\Prob[V_L \leq x] \leq 1 - (1 - q^{-x-t})^d - \frac {q^{-t+2}}{q-1}.$$
This estimation being true for all $t$, one can optimize it on $t$.
For simplicity, let us define $u = 1 - q^{-x-t}$; the variable $u$ now
varies in $[1-q^{-x}, 1]$, and for all $u$ in this range, one have 
$\Prob[V_L > v] \geq 1 - f(u)$ where $f(u) = u^d + d \lambda (1-u)$, 
$\lambda = \frac{q^{x+2}} {d(q-1)}$. Assume that $\lambda < 1$. A quick 
study of $f$ shows that it is minimal when $u = u_0 = \lambda^{1/(d-1)}$. 
Moreover, one can check (using AG-MG inequality for instance) that $u_0$ 
always lies in the interval $[1-q^{-x}, 1]$. It follows that $\Prob[V_L 
\leq x] \geq 1 - f(u_0) = 1 - \lambda \cdot (d - (d-1) u_0)$. We can 
further simplify this formula and write a bound depending only on 
$\lambda$. For this, remark that $\lambda \geq (1 + \frac{\log 
\lambda}{d-1})^{d-1}$. Raising to the power $d-1$, we find $u_0 \geq 
1 + \frac{\log \lambda}{d-1}$ and then:
\begin{equation}
\label{eq:lowbdist}
\Prob[V_L \leq x] \leq \lambda (1 - \log \lambda)
\quad \text{where }
\lambda = \frac{q^{x+2}} {d(q-1)}.
\end{equation}
We are now ready to prove Theorem \ref{th:VLlaw}. Let $\ell$ be a 
positive real number and define $v_0 = \log_q d - \frac 1 2$. Applying 
Formulas \eqref{eq:upbdist} and \eqref{eq:lowbdist} with $x = v_0 + 
(\ell + \frac 1 2)$ and $x = v_0 - (\ell + \frac 1 2)$ respectively, we 
find:
\begin{eqnarray*}
\Prob[V_L \geq v_0 + (\ell + {\textstyle \frac 1 2)}] & \leq & q^{-\ell} 
\\
\text{and} \quad
\Prob[V_L \leq v_0 - (\ell + {\textstyle \frac 1 2})] & \leq & 
\frac q{q-1} \cdot q^{-\ell} \cdot \Big(1 - \log\Big(\frac q{q-1}\Big) + 
\ell \cdot \log q\Big) \\
& \leq & \frac q{q-1} \cdot q^{-\ell} \cdot (1 + \ell \cdot \log q).
\end{eqnarray*}
Theorem \ref{th:VLlaw} follows by adding these two inequalities.
Corollary \ref{cor:VLdev} can be now easily deduced. Indeed, note 
that the function $v \mapsto \E((V_L - v)^2)$ is maximal when $v$ is 
equal to the expected value of $V_L$ and the value taken at this optimal 
point is the variance of $V_L$. It is then enough to bound the expected
value of $(V_L - v_0)^2$, which can be done as follows:
\begin{eqnarray*}
\E[(V_L - v_0)^2] & = & \int_0^\infty \Prob\big[(V_L - v_0)^2 \geq 
x\big] \cdot dx \\
& \leq & \frac 1 4 + \int_0^\infty \Prob\big[(V_L - v_0)^2
\geq {\textstyle (\ell + \frac 1 2)^2}\big] \cdot (2\ell+1) \cdot 
d\ell \\
& \leq & \frac 1 4 + \frac q {q-1} \cdot \int_0^\infty  q^{-\ell} \cdot 
(2 + \ell \cdot \log q) \cdot (2\ell+1) \cdot d\ell \\
& = & \frac 1 4 + \frac q{q-1} \cdot \Big( \frac 3{\log q} + \frac 8
{\log^2 q}\Big).
\end{eqnarray*}
The standard deviation of $V_L$ is then always less than $\sigma(q) = 
\sqrt{ \frac 1 4 + \frac q{q-1} \cdot ( \frac 3{\log q} + \frac 8 
{\log^2 q})}$. The function $\sigma$ is decreasing on $[2, \infty)$ and 
then bounded from above by its value at $2$ (which is $< 6.5$). Note 
furthemore that when $q$ goes to infinity, $\sigma(q) = \frac 1 2 + 
O(\frac 1{\log q})$.

\subsection{Generalization to block LU decomposition}
\label{subsec:block}

Let $\partd = (d_1, \ldots, d_r)$ be a tuple of positive integers such 
that $d_1 + \cdots + d_r = d$. By definition, a block LU decomposition 
of type $\partd$ of a matrix $M \in M_d(K)$ is a factorization $M = 
L_{\partd}(M) \cdot U_{\partd}(M)$ where $L_{\partd}(M)$ and 
$U_{\partd}(M)$ are respectively block unit lower triangular and block 
upper triangular with respect to the partition $\partd$:
$$L_{\partd}(M) = 
\left( \begin{matrix}
I_{d_1} & 0 & \cdots & 0 \\
\star & I_{d_2} & \ddots & \vdots \\
\vdots & \ddots & \ddots & 0 \\
\star & \cdots & \star & I_{d_r}
\end{matrix} \right)
\quad \text{and} \quad
U_{\partd}(M) = 
\left( \begin{matrix}
\star & \cdots & \cdots & \star \\
0 & \ddots &  & \vdots \\
\vdots & \ddots & \ddots & \vdots \\
0 & \cdots & 0 & \star
\end{matrix} \right)$$
where the $s$-th block has size $d_s$ and, for an integer $n$, $I_n$ 
denotes the identity matrix of size $n$.
Of course, a block LU decomposition of type $(1, 1, \ldots, 1)$ is 
nothing but a standard LU decomposition and every matrix $M \in M_d(K)$ 
admits a block LU decomposition of type $(d)$, which is simply $M = I_d 
\cdot M$. As in the standard case, a LU decomposition of type $\partd$ 
is unique (when it exists) --- which justifies the notations $L_{\partd}
(M)$ and $U_{\partd}(M)$ --- and, an invertible matrix $M$ admits such a 
decomposition if and only if, for all $i \in \{1, \ldots, s\}$, its 
$d_i$-th principal minor does not vanish.
For $\partd$ as before and $\omega \in \Omega$, we let $V_{L,\partd}$ 
denote the opposite of the smallest valuation of an entry of $L_{\partd}
(\omega)$. This defines a random variable $V_{L,\partd} : \Omega \to
\N \cup \{\infty\}$ for each $\partd$. The aim of this subsection is
to study them. Following the same strategy as in the standard case 
(\emph{i.e.} $\partd = (1, \ldots, 1)$), our first task is to establish 
a link between $V_{L,\partd}$ and the random variables $V_{i,j}$ defined 
in \S \ref{subsec:Vij}. To shorten notations, we set $I_s = \{d_1 +
\cdots + d_{s-1}+1, \ldots, d_1 + \cdots + d_s\}$ and recall that if
$M \in M_d(K)$ and $I, J \subset \{1, \ldots, d\}$, we denote by 
$M_{I,J}$ the submatrix of $M$ consisting of entries whose row index 
and column index are in $I$ and $J$ respectively. For all $s \in \{1,
\ldots, t\}$, we further introduce:
$$V_{\partd,s} = \sum_{i \in I_s} \: 
\min (V_{i,i}, V_{i,i+1}, \ldots, V_{i,d_1 + \cdots + d_s}).$$
Corollary \ref{cor:Vij} learns us that the $V_{\partd, s}$'s are 
mutually independant for $s$ varying between $1$ et $r$ (and $\partd$ 
remains fixed). The following Lemma shows that their laws are also
precisely known.

\begin{lemma}
\label{lem:Vds}
For all $s \in \{1, \ldots, r\}$ and all integer $v$, we have:
$$\Prob[V_{\partd,s} \leq v] = (1 - q^{-v-1}) (1 - q^{-v-2}) \cdots 
(1 - q^{-v-d_s}).$$
\end{lemma}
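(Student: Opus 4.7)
The plan is to reduce this formula to Abdel-Ghaffar's theorem by recognizing $V_{\partd,s}$ as having the same law as the valuation of the determinant of an auxiliary $d_s \times d_s$ random matrix.

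First, I would perform the obvious reindexing: setting $i' = i - (d_1 + \cdots + d_{s-1})$ and $j' = j - (d_1 + \cdots + d_{s-1})$, one can rewrite
\[
V_{\partd,s} = \sum_{i'=1}^{d_s} \min\bigl(V'_{i',i'},\, V'_{i',i'+1},\, \ldots,\, V'_{i',d_s}\bigr),
\]
where $V'_{i',j'} = V_{i'+(d_1+\cdots+d_{s-1}),\, j'+(d_1+\cdots+d_{s-1})}$. By Corollary \ref{cor:Vij}, the subfamily $(V'_{i',j'})_{1 \leq i' \leq j' \leq d_s}$ consists of mutually independent geometric random variables of parameter $1 - q^{-1}$, which is precisely the joint distribution of the analogous variables attached (via the construction of \S \ref{subsec:Vij}) to a random matrix $M'$ drawn from $M_{d_s}(R)$ equipped with its Haar measure.

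Second, I would apply Remark \ref{rem:det} to such a random $d_s \times d_s$ matrix: the valuation of its $d_s$-th principal minor, which is simply $v_R(\det M')$, is precisely $\sum_{i'=1}^{d_s} \min(V'_{i',i'}, \ldots, V'_{i',d_s})$. Combined with the previous step, this shows that $V_{\partd,s}$ and $v_R(\det M')$ have the same distribution.

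Finally, I would invoke Abdel-Ghaffar's theorem (Theorem 3 of \cite{abdel}, already recalled in \S \ref{subsec:expvalue}), which states that
\[
\Prob[v_R(\det M') \leq v] = (1 - q^{-v-1})(1 - q^{-v-2}) \cdots (1 - q^{-v-d_s}),
\]
yielding the desired formula. There is no real obstacle in this strategy; the argument is essentially a bookkeeping exercise once the structural identification between $V_{\partd,s}$ and the determinant valuation of a smaller random matrix has been made. The only step requiring a moment of care is verifying that the reindexing does preserve both independence and the geometric distribution, but this is immediate from Corollary \ref{cor:Vij}.
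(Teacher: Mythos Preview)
Your argument is correct, and in fact the paper's own Remark immediately following the proof explicitly names this route: identify $V_{\partd,s}$ with the valuation of the determinant of a random $d_s \times d_s$ matrix and invoke Abdel-Ghaffar's theorem as a black box. The paper's actual proof, however, proceeds differently. It sets $W_i = \min(V_{i',i'}, \ldots, V_{i',b})$ with $i' = b+1-i$ and $b = d_1 + \cdots + d_s$, observes from Corollary~\ref{cor:Vij} that the $W_i$'s are independent with $W_i$ geometric of parameter $1-q^{-i}$, and then shows by a direct double induction on $(\ell,v)$ that $S_\ell = W_1 + \cdots + W_\ell$ satisfies $\Prob[S_\ell \leq v] = (1-q^{-v-1})\cdots(1-q^{-v-\ell})$, the induction step splitting on the event $\{W_\ell = 0\}$. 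Your approach is shorter and more structural, but it outsources the computation to \cite{abdel}; the paper's approach is self-contained and, as the authors note, essentially reproduces Abdel-Ghaffar's own argument in the present notation. Either is perfectly acceptable.
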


\begin{proof}
Throughout this proof, we set $a = d_1 + \cdots + d_{s-1}$, $b = d_1 + 
\ldots + d_s$ and, for $i \in \{1, \ldots, d_s\}$ and $i' = b + 1 - i$,
$W_i = \min(V_{i',i'},V_{i',i'+1}, \ldots, V_{i',b})$. It follows from 
Corollary \ref{cor:Vij} that $W_i$ follows a geometric law of parameter 
$(1-q^{-i})$ and furthermore that the $W_i$'s ($1 \leq i \leq d_s$) are 
mutually independant. For all $\ell \in \{1, \ldots, d_s\}$, define 
moreover $S_\ell = W_1 + \cdots + W_\ell$.
Clearly $S_{d_s} = V_{\partd,s}$. We will prove by induction on the
couple $(\ell,v)$ (lexicographically ordered) that:
$$\Prob[S_\ell \leq v] = (1 - q^{-v-1}) (1 - q^{-v-2}) \cdots
(1 - q^{-v-\ell}).$$
For $\ell = 1$, the statement is true. Assume now that it is true for 
all $(\ell',v')$ with $\ell' < \ell$ or $\ell' = \ell$ and $v' < v$. 
The strategy is to decompose the event ``$S_\ell \leq v$'' in two parts 
according to the vanishing or the nonvanishing of $W_{d_s-\ell+1}$. 
Clearly, if $W_{d_s-\ell+1} = 0$, we have $S_\ell = S_{\ell-1}$. On the 
other hand, if we know for sure that $W_{d_s-\ell+1}$ does not vanish, 
one can subtract $1$ to it and get this way a new random variable which 
still follows of a geometric law with the same parameter. Hence, one can 
write:
\begin{eqnarray*}
\Prob[S_\ell \leq v] & = &
\Prob[W_{d_s-\ell+1} = 0] \cdot \Prob[S_{\ell-1} \leq v] +
\Prob[W_{d_s-\ell+1} > 0] \cdot \Prob[S_\ell \leq v-1] \medskip \\
& = & (1-q^{-\ell}) \cdot \Prob[S_{\ell-1} \leq v] +
q^{-\ell} \cdot \Prob[S_\ell \leq v-1].
\end{eqnarray*}
Replacing $\Prob[S_{\ell-1} \leq v]$ and $\Prob[S_\ell \leq v-1]$ by 
their values (coming from the induction hypothesis), we get the desired 
result.
\end{proof}

\begin{rem}
Alternatively, one can notice that $V_{\partd,s}$ follows the same law 
as the variable ``determinant of a random matrix of size $d_s$'' 
and then conclude by Abdel-Ghaffar's Theorem. Actually the proof we
have presented above is \emph{very} inspired by Abdel-Ghaffar's one.
\end{rem}

\begin{prop}
\label{prop:VLd}
We have $\max(V_{\partd,1}, \ldots, V_{\partd,r}) - v_R(\det) 
\leq V_{L,\partd} \leq \max(V_{\partd,1}, \ldots, V_{\partd,r})$.
\end{prop}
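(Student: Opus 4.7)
I plan to follow the template of Proposition \ref{prop:Vij} closely, with scalar pivots replaced by $d_t \times d_t$ block pivots. The central ingredient will be a block analog of Cramer's rule \eqref{eq:CramerL}: for $\alpha \in I_s$, $\beta \in I_t$ with $s > t$, set $b = d_1 + \cdots + d_t$, $J = \{1, \ldots, b\}$, and $I = (J \setminus \{\beta\}) \cup \{\alpha\}$. I first want to establish
$$L_{\partd}(M)_{\alpha, \beta} = \pm \frac{\det M_{I, J}}{\det M_{J, J}}.$$
The derivation is parallel to that of \eqref{eq:CramerL}: the block upper-triangular shape of $U_{\partd}(M)$ forces $U_{\partd}(M)_{\{b+1,\ldots,d\}, J} = 0$, so $M_{I, J} = L_{\partd}(M)_{I, J} \cdot U_{\partd}(M)_{J, J}$; passing to determinants and expanding $\det L_{\partd}(M)_{I, J}$ along column $\beta$ (where only the $\alpha$-th row contributes and the remaining cofactor, a principal submatrix of the block unit lower triangular $L_{\partd}(M)_{J, J}$ with one row and column removed from its $t$-th identity block, still has determinant~$1$) gives the formula.

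Next I would connect these determinants to the variables $V_{i,j}$ through the algorithmic picture developed in Proposition \ref{prop:Vij}. The first $b$ iterations of Algorithm \ref{algo:Vij} perform only column operations within the first $b$ columns, so $\det \omega_{I', J} = \pm \det T_b(\omega)_{I', J}$ for every $I'$ of cardinality $b$; moreover $T_b(\omega)_{i,j} = 0$ whenever $i < j \leq b$ and $v_R(T_b(\omega)_{j,j}) = \min(V_{j,j}, \ldots, V_{j,b})$. Because rows $1, \ldots, a$ of $T_b(\omega)$ (with $a = d_1 + \cdots + d_{t-1}$) vanish in all columns belonging to $I_t$, a Laplace expansion along these first $a$ rows factors off the same scalar $\prod_{i=1}^a T_b(\omega)_{i,i}$ from both $\det T_b(\omega)_{I, J}$ and $\det T_b(\omega)_{J, J}$. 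Cancelling this common factor in the block Cramer formula leads to
$$v_R(L_{\partd}(M)_{\alpha, \beta}) \,=\, v_R(D_{\alpha, \beta}) - V_{\partd, t}, \quad \text{where } D_{\alpha, \beta} = \det T_b(\omega)_{(I_t \setminus \{\beta\}) \cup \{\alpha\},\, I_t}.$$

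The upper bound $V_{L,\partd} \leq \max_s V_{\partd, s}$ is then immediate: $D_{\alpha, \beta}$ is the determinant of a $d_t \times d_t$ submatrix of $T_b(\omega) \in M_d(R)$, hence has valuation $\geq 0$. For the lower bound I would mimic the pigeonhole step from Proposition \ref{prop:Vij}: expanding $\det T_b(\omega) = \pm \det \omega$ as a Laplace sum over the block columns $I_t$ produces the ``diagonal'' term (indexed by $I' = I_t$) of valuation at least $V_{\partd, t}$, so whenever $V_{\partd, t} > v_R(\det \omega)$ at least one other term of the expansion must have valuation $\leq v_R(\det \omega)$, from which I would extract a pair $(\alpha, \beta)$ with $v_R(D_{\alpha, \beta}) \leq v_R(\det \omega)$, yielding $v_R(L_{\partd}(M)_{\alpha, \beta}) \leq v_R(\det) - V_{\partd, t}$ and hence $V_{L,\partd} \geq \max_s V_{\partd, s} - v_R(\det)$. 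The main obstacle lies precisely in this last extraction step: the Laplace expansion is indexed by arbitrary $d_t$-subsets of rows, so the small term supplied by pigeonhole is a priori a multi-swap minor, and a supplementary argument (further Laplace-expanding such a minor along its rows lying outside $I_t$) is required to produce an individual single-swap $D_{\alpha, \beta}$ of low valuation.
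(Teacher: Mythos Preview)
Your upper-bound argument is correct and is essentially an entry-by-entry, Cramer-style version of the paper's: the paper proves instead the block identity $L_{\partd}(\omega)_{I_t,I_s}\cdot T_{j(s)}(\omega)_{I_s,I_s}=T_{j(s)}(\omega)_{I_t,I_s}$ for $t\geq s$, together with $v_R\bigl(\det T_{j(s)}(\omega)_{I_s,I_s}\bigr)=V_{\partd,s}$, and reads off that every entry of $L_{\partd}(\omega)_{I_t,I_s}$ lies in $\pi^{-V_{\partd,s}}R$ by inverting the right factor. Your cancellation of the common scalar $\prod_{i\leq a}T_b(\omega)_{i,i}$ from the two Cramer minors is the entry-wise shadow of this same identity, so the two routes coincide.

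For the lower bound, however, the obstacle you flag is not a technical nuisance but a genuine obstruction: the inequality $\max_s V_{\partd,s}-v_R(\det)\leq V_{L,\partd}$ is \emph{false} as soon as some $d_s>1$, so no ``supplementary argument'' can rescue it. Take $d=4$, $\partd=(2,2)$ and
\[
\omega=\begin{pmatrix}\pi^n&0&1&0\\0&\pi^n&0&1\\1&0&0&0\\0&1&0&0\end{pmatrix}.
\]
Then $\det\omega=1$, Algorithm~\ref{algo:Vij} gives $V_{\partd,1}=2n$, while $L_{\partd}(\omega)_{I_2,I_1}=\pi^{-n}I_2$, so $V_{L,\partd}=n<2n$. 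In your language: the Laplace expansion of $\det T_2(\omega)$ along the columns $I_1$ has a single nonzero term, supported on the full double-swap $I'=\{3,4\}$; every single-swap minor $D_{\alpha,\beta}$ has valuation at least $n$, never $\leq 0$. The paper itself asserts that the proposition ``follows directly'' from its two displayed claims, but in fact only the upper bound does. (A weaker bound, namely $V_{L,\partd}\geq\frac{1}{d_s}\bigl(V_{\partd,s}-v_R(\det)\bigr)$ for each $s$, \emph{does} follow from either argument; and only the upper bound is actually invoked in the paper's later applications.)
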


\begin{proof}
We follow the lines of the proof of Proposition \ref{prop:Vij}. To avoid 
confusion, we begin by letting $T_j(\omega)$ denote the matrix $\omega$ 
computed by Algorithm \ref{algo:Vij} (run with $\omega$ as input) after 
the $j$-th iteration of the main loop. Pick some $s \in \{1, \ldots, 
r\}$ and set $j(s) = d_1 + \ldots + d_s$. We are going to prove the two 
following statements from which the Proposition will follow directly:
\begin{itemize}
\item the determinant of the square $d_s \times d_s$ matrix 
$T_{j(s)}(\omega)_{I_s, I_s}$ has valuation $V_{\partd,s}(\omega)$;
\item for all $t \in \{s, \ldots, r\}$, we have the identity 
$L_{\partd}(\omega)_{I_t,I_s} \cdot T_{j(s)}(\omega)_{I_s, I_s} = 
T_{j(s)}(\omega)_{I_t,I_s}$.
\end{itemize}
The first assertion is easily proved. Indeed, by construction, 
the submatrix $T_{j(s)}(\omega)_{\{1, \ldots, j(s)\},\{1, \ldots, 
j(s)\}}$ is lower triangular and that its $i$-th diagonal entry has 
valuation $\min (V_{i,i}, V_{i,i+1}, \ldots, V_{i,j(s)})$. To prove the 
second assertion, we first remark that, up to replacing $\omega$ by 
$\omega + \pi^N$ for a sufficiently large integer $N$, one may assume 
that $\omega$ is invertible. All the matrices $T_{j(s)}(\omega)_{I_s, 
I_s}$ are then also invertible. Consider the matrix $L \in M_d(K)$ whose 
$i$-th column is the $i$-th column of $T_{j(s)}(\omega)$ where $s$ is 
the unique such that $i \in I_s$. It is apparently lower block 
triangular with respect to the partition $\partd$. Furthermore, noting 
that, for $i \in I_s$, the $i$-th column of $T_{j(s)}(\omega)$ is a 
linear combination of the first $j(s)$ columns of $\omega$, we see that 
$L^{-1} \cdot \omega$ is upper block triangular. Hence, if $D$ is the 
diagonal block matrix:
$$D = \left( \begin{matrix}
T_{j(1)}(\omega)_{I_1, I_1} \\
& \ddots \\
& & T_{j(r)}(\omega)_{I_r, I_r}
\end{matrix} \right)$$
the factorization $\omega = (L D^{-1})\cdot (D L^{-1} \omega)$ is the
LU decomposition of type $\partd$ of $\omega$. Therefore $L_{\partd} 
(\omega) = L D^{-1}$. Our claim follows directly from this.
\end{proof}

From Lemma \ref{lem:Vds}, we easily derive that $q^{-v} \leq \Prob[
V_{\partd, s} \geq v] \leq \frac{q}{q-1} \cdot q^{-v}$. Arguing then 
as in \S \ref{subsec:proofstats}, one can prove analogues of the
results we have shown before concerning the random variable $V_L$: the 
expected value of $V_{L,\partd}$ is equal to $\log_q s + O(1)$, its 
standard deviation is a $O(1)$ (where the notation $O(1)$ stands for 
a quantity bounded by a universal constant which can be made explicit)
and, actually, we even have a more precise (but also more technical)
estimation of its law in the spirit of Theorem \ref{th:VLlaw}.

\section{LU decomposition over a DVR: algorithmic issues}
\label{sec:algo}

LU decomposition is a very basic and important tool when we are doing 
algorithmics involving matrices, and especially matrices over a complete 
DVR. But unfortunately, on some particular inputs, computing it may 
cause important numerical instability; it is the case for instance if 
the top left entry of the input matrix has a very large valuation 
(compared to the other entries). The first aim of this second section, 
is to study this phemonemon; more precisely, following the ideas of \S 
\ref{sec:stats}, we will design a new algorithm to compute LU 
decomposition (see Algorithm \ref{algo:stableLU}) and show that the set 
of unpleasant inputs for which it is numerically unstable is very small.

In particular, we may expect that if Algorithm \ref{algo:stableLU} is 
called as a subroutine by an other \emph{probabilistic} algorithm, it 
will not never generate important losses of precision. 
In \S\S \ref{subsec:simulPLU} and \ref{subsec:sheaf}, we will illustrate 
this idea on a particular example: we will propose a probabilistic 
stable algorithm (based on LU decomposition) whose aim is to compute a 
basis of a coherent over $\A^1_K$ (where $K$ is the fraction field of a 
complete DVR) from the knowledge of all its stalks.

\medskip

We keep the general notations of \S \ref{sec:stats}: let $R$ be a 
discrete valuation ring whose valuation $v_R : R \to \N \cup \{\infty\}$ 
is assumed to be surjective. Let $\pi$ be an element of $R$ of valuation 
$1$. Let $k$ (resp. $K$) denote the residue field (resp. the fraction 
field) of $R$ and set $q = \card\: k$. We recall that $v_R$ extends 
uniquely to $K$ and that, in a slight abuse of notations, we continue to 
denote by $v_R$ this extended map. We recall also that we have set 
$\Omega = M_d(R)$ and that this space is endowed with its Haar measure. 
For $\omega \in \Omega$, denote by $W_i(\omega)$ the valuation of the 
$i$-th principal minor of $\omega$ and set $W = \max(W_1, \ldots, W_d)$.
Thanks to Abdel-Ghaffar's Theorem (see \cite{abdel}), the law of the 
$W_i$'s is known: $\Prob[W_i \leq v] = (1 - q^{-v-1}) (1 - q^{-v-2}) 
\cdots (1 - q^{-v-i})$ for all $v \geq 0$ and $i \in \{1, \ldots, d\}$. 
From this, we derive $\Prob[W_i > v] \leq \frac{q^{-v}}{q-1}$ and then:
\begin{equation}
\label{eq:}
\Prob[W > v] \leq d \cdot \frac {q^{-v}}{q-1}
\end{equation}
for all nonnegative integer $v$. Adding all these probabilities, one
finds $\E[W] = \log_q d + O(1)$ where, as usual, the notation $O(1)$ 
refers to a quantity bounded by a universal constant.

\subsection{Loss of precision in LU decomposition}
\label{subsec:stableLU}

By Formula \eqref{eq:CramerL}, we know that the entries of $L(M)$ can be 
all expressed as the quotient of one minor by one principal minor. 
Noting that if $x$ and $y$ are both known with precision $O(\pi^N)$ and 
if $y$ has valuation $v$, the quotient $\frac x y$ is known with 
precision at least $O(\pi^{N-2v})$, one may expect that a good algorithm 
computing the LU factorization of $\omega$ would shrink the initial 
precision by a factor $\pi^{2 \cdot W(M)}$.

Unfortunately, a quick experiment shows that the naive algorithm based
on usual Gauss elimination generates losses of precision much more 
important than that. For example, on a random input matrix $M \in 
M_{25}(\Z_5)$ given with precision $O(5^N)$, it outputs a matrix $L$ 
which is in average known up to precision $O(5^{N-c})$ where $c \simeq 
10$ whereas the mean value of $2 \cdot W(M)$ is only $\simeq 2 \cdot 
\log_q d = 4$. For matrices of size $d = 125$, the deviation is 
amplified: we find $c \simeq 50$... to be compared to $2 \cdot \log_q d 
= 6$.

\subsubsection{A first simple solution}

Our starting remark is the following: it follows from Cramer like 
formulae \eqref{eq:CramerL} that if $M$ are $M'$ are two matrices in
$M_d(R)$ congruent modulo $\pi^N$ (for some positive integer $N$)
such that $W(M) < N$, that $W(M') = W(M)$ and 
$$L_{i,j}(M) \equiv L_{i,j}(M') \pmod {\pi^{N-2 \cdot W_i(M)}}$$
for all $i,j \in \{1, \ldots, d\}$ with $i > j$. In particular, 
under the previous assumptions, we have $L(M) \equiv L(M') \pmod 
{\pi^{N-2\cdot W(M)}}$. This result suggests the following method
to compute $L(M)$ with a correct precision when $M$ is a matrix known
with precision $O(\pi^N)$:
\begin{itemize}
\item we lift $M$ to a matrix $M'$ known with precision $O(\pi^{N'})$
for some $N' > N$;
\item we compute $W(M')$ and $L(M')$ with our favorite algorithm
(\emph{e.g.} Gauss elimination)\footnote{Generally, these two 
computations can be done simultaneously. It happens in particular
if one uses Gauss elimination.};
\item we answer $L(M) = L(M') + O(\pi^{N-2\cdot W(M')})$.
\end{itemize}
By what we have said before, our answer $L(M)$ is always correct. 
Furthemore, if $N'$ is sufficiently large, then $L(M')$ will be known 
with precision at least $O(\pi^{N-2\cdot W(M')})$ and $L(M)$ itself
will be known with precision $O(\pi^{N-2\cdot W(M')})$.

It then remains to find a suitable value for $N'$. Of course, it will 
strongly depend on the algorithm we use to compute $L(M')$. Let us study 
a bit the case of Gauss elimination. Since the successive pivots 
appearing during the elimination have valuations $W_1(M'), W_2(M'), 
\ldots, W_d(M')$ and since we are only dividing by pivots, the maximal 
loss of precision is bounded from above by $2\cdot (W_1(M') + \cdots + 
W_d(M'))$. In other terms, using Gauss elimination, one can certainly 
compute $L(M)$ with precision $O(\pi^{N-2\cdot (W_1(M') + \cdots + 
W_d(M'))})$. As a consequence, it is enough to choose $N'$ so that:
$$N' - N \geq 2 \cdot \big(W_1(M) + \cdots + W_d(M) - W(M)\big).$$
However, at the very beginning, we have not computed the $W_i(M)$'s 
yet. So we cannot figure out at this moment what is the best value of 
$N'$ (\emph{i.e.} the smallest one satisfying the above inequality). 
Nevertheless, we know that in average $W_i(M') \simeq \frac 1 q$ and 
$W(M) \simeq \log_q d$. To begin with, we can then try to take $N' = N + 
\lceil\frac {2d} q\rceil$ and see what happens: we do the computation with this 
particular $N'$, we determine the $W_i(M)$'s, if the above inequality is 
fulfilled, we are done, otherwise, we determine the right $N'$ and redo 
the computation. Actually, it could happen --- but it is very rare ---
that the first precision $O(\pi^{N'})$ does not allow us to determine 
some of the $W_i(M)$'s; in that case, we just guess a new larger $N'$, 
try with it and continue like this until it works.

Let us finally analyze the complexity of this method in the favorable 
case where $N' = N + \frac{2d} q$ is enough. In order to fix notations, 
we assume moreover that doing basic operations (\emph{i.e.} additions, 
substractions, multiplications and divisions) in $R$ with precision 
$\pi^N$ requires $O((N \log q)^\alpha)$ bit operations where $\alpha$ is 
some constant\footnote{In usual situations, one can take $\alpha = 1 + 
\varepsilon$ for all positive real number $\varepsilon$.}, necessarily 
greater than or equal to $1$. Since the complexity of Gauss elimination 
is $O(d^3)$ operations in the base ring, our method needs:
$$\textstyle O\big(\:d^3 \cdot (N + \frac d q)^\alpha \cdot \log^\alpha 
q\:\big)$$
bit operations. If $d \ll qN$, it is quite nice. However, if the opposite 
situation when $d \gg qN$, the dominant term in the above complexity is 
$d^{3+\alpha}$, which is very large and actually not really acceptable 
for many practical applications.

\subsubsection{A stable algorithm to compute LU decomposition}
\label{subsubsec:stableLU}

In this subsection, we propose and study a different method to compute 
LU decomposition which has the advantage of not requiring to increase 
the precision at any time and whose complexity is comparable to Gauss 
elimination. Our algorithm is strongly inspired by the constructions of 
\S \ref{sec:stats} and especially those of \S \ref{subsec:Vij}. Here is
it:

\begin{algorithm}
  \SetKwInOut{myInput}{\it Input:}
  \SetKwInOut{myOutput}{\it Output:}
  \SetKwInOut{Nota}{\it Notations:}
  \SetKwInOut{Void}{}

  \myInput{A matrix $M$ of size $d \times d$ known with precision $O(\pi^N)$}
  \myOutput{The $L$-part of the LU decomposition of $M$}

  \BlankLine

  \Nota{$\star\,$ $d$ is the dimension of the matrix $M$}
  \Void{$\star\,$ $A_{i,j}$ denotes the $(i,j)$-th entry of a matrix $A$}
  \Void{$\star\,$ $A_j$ denotes the $j$-th row of $A$}

  \BlankLine

  $\omega$ $\leftarrow$ $M$\;
  $L$ $\leftarrow$ identity matrix of size $d \times d$\;
  \For{$j$ from $1$ to $d$}{
    \For{$i$ from $1$ to $j-1$}{
      \lIf{$v_R(\omega_{i,j}) < v_R(\omega_{i,i})$}
        {swap $\omega_j$ and $\omega_i$\;}
      \lIf {$\omega_{i,i} \neq 0$}
        {$s$ $\leftarrow$ $\frac{\omega_{i,j}}{\omega_{i,i}}$ 
         lifted to precision $O(\pi^N)$;\,\label{line:lift}
         $\omega_j$ $\leftarrow$ $\omega_j - s \cdot \omega_i$%
         \label{line:updateomega}\;}
    }
    $v$ $\leftarrow$ $\sum_{k=1}^j v_R(\omega_{k,k})$\;
    \lFor{$i$ from $j+1$ to $d$}{
      $L_{i,j}$ $\leftarrow$ $\frac{\omega_{i,j}}{\omega_{j,j}}
      + O(\pi^{N-v-\max(0,v_R(\omega_{j,j})-v_R(\omega_{i,j}))})$%
      \label{line:updateL}\;
    }
  }
  \Return{$L$};
\caption{A stable algorithm to compute the $L$-part of the LU
decomposition\label{algo:stableLU}}
\end{algorithm}

A first important remark related to Algorithm \ref{algo:stableLU} is the 
following: at each step, all entries of $\omega$ are known with 
precision $O(\pi^N)$. Indeed, $\omega$ itself is updated only on line 
\ref{line:updateomega} and the corresponding computation does not affect 
the precision (because $s$ has been lifted modulo $\pi^N$ previously).

\paragraph{Correctness of Algorithm \ref{algo:stableLU}}

We fix an integer $j \in \{1, \ldots, d\}$ and focus on the matrix 
$\omega$ computed by the algorithm after the $j$-th iteration of the 
main loop. It is clear that it is obtained from $M$ by performing a 
sequence of elementary operations on its $j$ first columns. Thus, for 
all $i > j$, we have $L(M)_{i,j} = \frac{\det \omega^{(j)}_{I,J}}{\det 
\omega^{(j)}_{J,J}}$ where $I = \{1, \ldots, j-1, i\}$ and $J = \{1, 
\ldots, j\}$.
On the other hand, by construction, $\omega^{(j)}_{I,J}$ and 
$\omega^{(j)}_{J,J}$ are two upper triangular matrices \emph{modulo 
$\pi^N$}. Their determinants are then congruent to the product of their
diagonal entries modulo $\pi^N$. Therefore:
$$L(M)_{i,j} = \frac{\omega^{(j)}_{1,1} \cdots \omega^{(j)}_{j-1,j-1}
\cdot \omega^{(j)}_{i,j}+ O(\pi^N)}{\omega^{(j)}_{1,1} \cdots 
\omega^{(j)}_{j-1,j-1} \cdot \omega^{(j)}_{j,j}+ O(\pi^N)}.$$
Of course, the value of this quotient is $\frac{\omega^{(j)}_{i,j}}
{\omega^{(j)}_{j,j}}$ up to some precision. To compute this precision,
it is easier to work with relative precision (\emph{i.e.} the difference 
between the absolute precision and the valuation); indeed, we know that 
the relative precision of a quotient is equal to the minimum between the
relative precisions of the numerator and the numerator. In our case, if 
we set $v = v_R(\omega^{(j)}_{1,1}) + \cdots + v_R(\omega^{(j)}_{j,j})$ 
and $w = v_R(\omega^{(j)}_{i,j}) - v_R(\omega^{(j)}_{j,j})$, the 
relative precision of the numerator (resp. the denominator) is $N-(v+w)$
(resp. $N-v$). Thus, the relative precision of the quotient is 
$N-v-\max(0,w)$ and its absolute precision is then $N-v+\min(0,w)$
(since its valuation is $w$).
The value $L_{i,j}$ computed by Algorithm \ref{algo:stableLU}, together
with its precision, are then correct.

\paragraph{Precision issues}

Keeping the previous notations, one certainly have $w \geq -v$ and then 
$N-v+\min(0,w) \geq N-2v$. In other words, the $(i,j)$-th entry of the 
matrix $L$ returned by the Algorithm \ref{algo:stableLU} is known with 
precision at least $O(p^{N-2V_j(M)})$ (recall that $V_j(M)$ denotes the 
valuation of $\omega_{j,j}$ at the end of the $j$-th loop, \emph{i.e.} 
our previous $v$). The maximal loss of precision is then bounded above 
by $2 \cdot \max(V_1(M), \ldots, V_d(M))$. By the results of \S 
\ref{sec:stats}, we know that the mean of this upper bound is close to 
$2 \cdot \log_q d$, that is the value we expected.

\subsubsection{Algorithm \ref{algo:stableLU} and Hermite normal form}
\label{subsec:hermite}

Let us denote by $H'(M)$ the matrix $\omega$ computed at the end of the 
execution of Algorithm \ref{algo:stableLU}. It worths remarking that 
$H'(M)$ has a lot of things to do with the Hermite normal form of $M$. 
Let us first agree on the definition of the Hermite normal form of $M$: 
throughout this paper, it will refer to the unique lower triangular 
matrix whose diagonal entries are powers of $\pi$ and which is 
right-equivalent to $M$ (it means that $H(M)$ is obtained from $M$ by 
multiplying on a right by a unimodular matrix). We will denote it by 
$H(M)$.

\begin{prop}
\label{prop:hermite}
Let $M \in M_d(R)$ known with precision $\pi^N$.
We assume that all diagonal entries of $H'(M)$ are not congruent to 
$0$ modulo $\pi^N$ and, for all $j \in \{1, \ldots, d\}$, we write 
$H'_{j,j}(M) = p^{v_j} u_j$ where $v_j$ is a nonnegative integer and 
$u_j$ is a unit. For all $i,j \in \{1, \ldots, d\}$, we then have:
$$\begin{array}{rrcl}
\text{if } i < j: & 
H_{i,j}(M) & = & 0 \\
\text{if } i = j: & 
H_{i,j}(M) & = & \pi^{v_j} \\
\text{if } i > j: & 
H_{i,j}(M) & \equiv & u_j^{-1} \cdot H'_{i,j}(M) \pmod{\pi^{N-v_j}}.
\end{array}$$
\end{prop}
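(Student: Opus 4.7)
The approach is to show that $H'(M)$ is itself a lower triangular matrix right-equivalent to $M$, so that $H(M)$ is obtained from $H'(M)$ simply by right-multiplication by the diagonal unit matrix normalizing the pivots. The precision claim then follows from a short precision-tracking calculation.

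First I would check that $H'(M) = M \cdot Q$ for some $Q \in \GL_d(R)$. Every elementary transformation performed on $\omega$ by Algorithm \ref{algo:stableLU} --- a swap between $\omega_j$ and $\omega_i$, or the combination $\omega_j \leftarrow \omega_j - s\cdot \omega_i$ with $i < j$ --- corresponds to right-multiplication by a unimodular matrix (a permutation matrix, respectively a unit upper triangular elementary matrix); composing these across the whole execution yields the sought $Q$.

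Next I would verify by induction on the main loop that after iteration $j$ one has $\omega_{i',j'} = 0$ whenever $i' < j' \leq j$; in particular $H'(M)$ is lower triangular. The delicate step is to check that the swap between columns $i$ and $j$ occurring inside iteration $j$ does not spoil the zeros already established for earlier columns: column $i$ has zeros in rows $i' < i$ by the outer induction hypothesis, while column $j$ has zeros in the same rows thanks to the preceding inner iterations, so after the swap these two structures are merely exchanged and the invariant survives; the subsequent subtraction then clears the $(i,j)$-entry while preserving the other zeros. This inductive bookkeeping is the main obstacle.

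With these two facts in hand, set $D = \Diag(u_1, \ldots, u_d)$. The product $H'(M) \cdot D^{-1}$ is lower triangular, right-equivalent to $M$, and has diagonal entries $\pi^{v_j}$; by uniqueness of the Hermite normal form, it must equal $H(M)$. The three cases of the statement can then be read off its entries. Finally, for the precision estimate, the identity $H'_{j,j}(M) = \pi^{v_j} u_j$ together with the remark made right after Algorithm \ref{algo:stableLU} (that every entry of $H'(M)$ is known modulo $\pi^N$) shows that $u_j$, and hence the unit $u_j^{-1}$, is determined only modulo $\pi^{N-v_j}$. Multiplying $u_j^{-1}$ by $H'_{i,j}(M)$ --- of absolute precision $N$ and nonnegative valuation --- yields absolute precision $\min(N, N-v_j) = N-v_j$ for the product, which is exactly the congruence claimed.
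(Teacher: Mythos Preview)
Your argument has a gap at the second step. You assert that $H'(M)$ is lower triangular, arguing that each inner iteration ``clears the $(i,j)$-entry''. But look again at line~\ref{line:lift} of Algorithm~\ref{algo:stableLU}: the scalar $s$ is only a lift of $\omega_{i,j}/\omega_{i,i}$ to precision $O(\pi^N)$, not the exact quotient. After the update $\omega_j \leftarrow \omega_j - s\,\omega_i$ the entry $\omega_{i,j}$ lands in $\pi^N R$, not at zero. (This is precisely the content of the remark you invoke at the end: every entry of $\omega$ is known \emph{only} modulo $\pi^N$.) Hence $H'(M)$ is lower triangular only modulo $\pi^N$, the product $H'(M)\cdot D^{-1}$ is not in Hermite normal form over $R$, and your appeal to uniqueness collapses.

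The paper's proof shares your first step --- right-equivalence of $H'(M)$ and $M$ via a unimodular matrix --- but then confronts the residual above-diagonal entries directly. Since $H(M)=H(H'(M))$, one must still pass from $H'(M)$ to its Hermite form by clearing each above-diagonal entry $\pi^N v_{i,j}$ using the pivot $\pi^{v_i}u_i$; the associated column operation carries a scalar divisible by $\pi^{N-v_i}$, and it is these high-valuation corrections (together with the final division by $u_j$) that produce the congruence in the statement. Your closing paragraph tracks only the precision lost in inverting $u_j$; it misses this second source of error, which is where the actual work lies.
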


\begin{rem}
Keeping the notations of the Proposition, it is clear that $u_j$ is only 
known modulo $\pi^{N-v_j}$. The congruence of the Proposition is then,
by nature, the best one can expect.
\end{rem}

\begin{proof}
Let $W'(M)$ be the matrix $W'$ computed by Algorithm \ref{algo:stableLU}.
One can easily check that $W'(M)$ is unimodular and moreover that $H'(M)
= M \cdot W'(M)$. Consequently the Hermite normal form of $M$ is equal
to the Hermite normal form of $H'(M)$.

On the other hand, we know that $H'(M)$ has a very particular shape: 
firstly, it is lower triangular modulo $\pi^N$ and secondly, by 
assumption, its diagonal entries are not divisible by $\pi^N$. Thus, 
$H(M)$ is obtained from $H'(M)$ by clearing one by one its entries lying 
above the diagonal and by dividing its $j$-th column by $u_j$. But, if 
$H_{i,j}(M) = \pi^N v_{i,j}$ (for some pair $(i,j)$ with $i<j$), one 
clears the $(i,j)$-th entry of $H'(M)$ by doing the following elementary 
operation on columns: $H'_j(M) \leftarrow H'_j(M) - \pi^{N-v} u_i^{-1} 
v_{i,j} H'_i(M)$. Hence clearings do not affect the value of $H'_{i,j} 
(M)$ modulo $\pi^{N-v_j}$. The Proposition follows easily from this 
observation.
\end{proof}

\subsubsection{The notion of L'V' decomposition}

The $L$-part of the LU decomposition has of course very nice abstract 
properties but unfortunately does not behave very well regarding to 
precision. Indeed, as we have seen before, if a matrix $M$ is known 
modulo $\pi^N$, it is not true that $L(M)$ is known with the same 
precision. But, beyond that, the precision data attached to $L(M)$ is 
not uniform in the sense that all entries of $L(M)$ are \emph{not} known 
with the same precision. In order to tackle this problem, we introduce 
the following definition.

\begin{deftn}
Let $M \in M_d(R)$ and $N$ be a positive integer. A \emph{L'V' 
decomposition of $M$ modulo $\pi^N$} is a couple of $d \times d$ 
matrices $(L',V')$ such that $L' \equiv M V' \pmod {\pi^N}$ and $L'$ and 
$V'$ are lower triangular modulo $\pi^N$ and upper triangular modulo 
$\pi^N$ respectively.

If there exists a diagonal entry of $L'$ which is congruent to $0$ 
modulo $\pi^N$, $(L',V')$ is said to be \emph{degenerate}. Otherwise, 
it is \emph{nondegenerate}.
\end{deftn}

\begin{rem}
It is easy to see that if $(L',V')$ is nondegenerate, then all diagonal 
entries of $V'$ are not congruent to $0$ modulo $\pi^N$ as well.
\end{rem}

It is not difficult to modify Algorithm \ref{algo:stableLU} so that it 
computes a L'V' decomposition modulo $\pi^N$; we end up this way with
Algorithm \ref{algo:LV}.
\begin{algorithm}
  \SetKwInOut{myInput}{\it Input:}
  \SetKwInOut{myOutput}{\it Output:}

  \myInput{A matrix $M$ of size $d \times d$ known with precision $O(\pi^N)$}
  \myOutput{A L'V' decomposition of $M$ modulo $\pi^N$}

  \BlankLine

  $\omega$ $\leftarrow$ $M$\;
  $L',V'$ $\leftarrow$ two new matrices of size $d \times d$\;
  $W'$ $\leftarrow$ identity matrix of size $d \times d$\;
  \For{$j$ from $1$ to $d$}{
    \For{$i$ from $1$ to $j-1$}{
    \lIf{$v_R(\omega_{i,j}) < v_R(\omega_{i,i})$}
      {swap $\omega_j$ and $\omega_i$;\,
       swap $W'_j$ and $W'_i$\;}
    \If {$\omega_{i,i} \neq 0$}
      {$s$ $\leftarrow$ $\frac{\omega_{i,j}}{\omega_{i,i}}$ 
       lifted to precision $O(\pi^N)$\;
       $\omega_j$ $\leftarrow$ $\omega_j - s \cdot \omega_i$;\,
       $W'_j$ $\leftarrow$ $W'_j - s \cdot W'_i$\;}
    }
    $L'_j$ $\leftarrow$ $\omega_j$;\,
    $V'_j$ $\leftarrow$ $W'_j$\label{line:updateLV}\;
  }
  \Return{$L'$, $V'$};
\caption{An algorithm to compute a L'V' decomposition\label{algo:LV}}
\end{algorithm}
On the other hand, it is worth noting that L'V' decomposition is
closely related to LU decomposition. The following proposition makes
this statement precise.

\begin{prop}
\label{prop:LVLU}
Let $M \in M_d(R)$, $N$ be a positive integer and $(L',V')$ be a 
nondegenerate L'V' decomposition of $M$ modulo $\pi^N$. Then $M$ admits
a LU decomposition and for all $(i,j)$ with $1 \leq i < j \leq d$,
one have:
$$L_{i,j}(M) \equiv \frac{L'_{i,j}}{L'_{j,j}} \pmod {\pi^{N - v_j
- \max(0,v_R(L'_{j,j})-v_R(L'_{i,j}))}}$$
with $v_j = \sum_{k=1}^{j-1} v_R(L'_{k,k}) - v_R(V'_{k,k})$.
\end{prop}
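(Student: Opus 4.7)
The plan is to combine Cramer's rule \eqref{eq:CramerL} with the defining identity $L' \equiv MV' \pmod{\pi^N}$ so as to express $L_{i,j}(M)$ in terms of entries of $L'$ alone, modulo a controlled error.

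The first key observation exploits the upper triangularity of $V'$ modulo $\pi^N$. For $J = \{1,\ldots,j\}$ and any $I \subset \{1,\ldots,d\}$ of size $j$, writing $(MV')_{i',\ell'} = \sum_k M_{i',k} V'_{k,\ell'}$ and discarding the terms with $k > \ell' \leq j$ (all $O(\pi^N)$) gives the submatrix identity $M_{I,J} \cdot V'_{J,J} \equiv L'_{I,J} \pmod{\pi^N}$. Taking determinants yields
$$\det M_{I,J} \cdot \det V'_{J,J} \;\equiv\; \det L'_{I,J} \pmod{\pi^N}.$$
The lower triangularity of $L'$ modulo $\pi^N$ then forces only the identity permutation to contribute to each $L'$-determinant, so
$$\det L'_{J,J} \equiv \prod_{k \leq j} L'_{k,k}, \qquad \det L'_{I,J} \equiv L'_{i,j} \prod_{k<j} L'_{k,k} \pmod{\pi^N}$$
when $I = \{1,\ldots,j-1,i\}$ with $i > j$; similarly $\det V'_{J,J} \equiv \prod_{k \leq j} V'_{k,k} \pmod{\pi^N}$. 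Specializing the determinant congruence to $I = J$, together with nondegeneracy, pins down the valuation of $\delta_j(M)$ and in particular shows it is nonzero whenever the modulus of the claim is nontrivial; hence $M$ admits an LU decomposition.

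For the main formula I apply Cramer, $L_{i,j}(M) = \det M_{I,J}/\det M_{J,J}$, and form the ratio of the two instances of the determinant congruence (one with $I = \{1,\ldots,j-1,i\}$, one with $I = J$). The factor $\det V'_{J,J}$ cancels, leaving
$$L_{i,j}(M) = \frac{\det L'_{I,J} + O(\pi^N)}{\det L'_{J,J} + O(\pi^N)}.$$
Substituting the product expansions and dividing numerator and denominator by the common nonzero factor $\prod_{k<j} L'_{k,k}$ reduces this to a quotient of the shape $(L'_{i,j} + \eta_1)/(L'_{j,j} + \eta_2)$ with $\eta_1,\eta_2$ of controlled valuation. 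The result then follows from the standard $p$-adic estimate $(a+\eta_1)/(b+\eta_2) - a/b = O\bigl(\pi^{\mu - v_R(b) - \max(0, v_R(b)-v_R(a))}\bigr)$ with $\mu = \min(v_R(\eta_1),v_R(\eta_2))$, applied to $a = L'_{i,j}$ and $b = L'_{j,j}$.

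The main technical obstacle is the precision bookkeeping in this final division: one must track how the $\pi^N$-errors propagate through the cancellation of $\det V'_{J,J}$ and through the division by $\prod_{k<j} L'_{k,k}$, and verify that the resulting valuation losses assemble exactly into the claimed modulus $\pi^{N - v_j - \max(0, v_R(L'_{j,j}) - v_R(L'_{i,j}))}$. A mild subtlety is that the statement is vacuous as soon as $v_j \geq N$; one should view the proof as taking place in the regime $v_j < N$, where the principal minors of $M$ behave as predicted and the argument is meaningful.
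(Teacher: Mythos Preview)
Your proposal is correct and follows essentially the same route as the paper intends: the paper's own ``proof'' simply refers the reader to the correctness argument for Algorithm~\ref{algo:stableLU} in \S\ref{subsubsec:stableLU}, and what you have written is precisely that argument, rephrased for an abstract L'V' decomposition (using the submatrix identity $M_{I,J}\,V'_{J,J}\equiv L'_{I,J}$ in place of the column-operation bookkeeping on $\omega^{(j)}$). The key steps---Cramer's formula, triangularity reducing the relevant determinants to diagonal products modulo $\pi^N$, and the relative-precision analysis of the final quotient---match the paper's exactly.
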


\begin{proof}
Left to the reader (the arguments are very similar to those detailed
in \S \ref{subsubsec:stableLU}).
\end{proof}

Of course, executing first Algorithm \ref{algo:LV} and then applying the 
result of Proposition \ref{prop:LVLU} is almost the same than running 
directly Algorithm \ref{algo:stableLU}. Nevertheless splitting Algorithm 
\ref{algo:stableLU} in two parts can be very useful for some 
applications (we will see an example of this in \S \ref{subsec:hafner}) 
because, as we have already said before, the pair $(L',V')$ is generally 
easier to manipulate than $L(M)$ since it carries a flat precision (and, 
in addition, it consists of two integral matrices if $M$ is itself
integral).

\subsubsection{Complexity and Hafner-McCauley's algorithm}
\label{subsec:hafner}

It is easily seen that the asymptotic complexity of Algorithm 
\ref{algo:stableLU} is $O(d^3)$ (operations in the base ring $R$) where 
$d$ denotes the size of the input matrix. It is then similar to the 
complexity of usual Gauss elimination whereas it is true that our 
Algorithm \ref{algo:stableLU} runs a little bit more slowly because it 
basically makes more swaps and copies.

When precision is not an issue (\emph{e.g.} when we are working over an 
exact ring), Hafner and McCauley showed in \cite{hafner} how to reduce 
the computation of the LU decomposition to matrix multiplication and got 
this way a nice recursive algorithm that computes the LU decomposition 
of a matrix in only $O(d^\omega)$ operations where $\omega$ is the 
exponent for matrix multiplication\footnote{Nowadays, the best known 
value for $\omega$ is $2.376$ but, unfortunalety, the corresponding 
algorithm due to Coppersmith and Winograd (see \cite{coppersmith}) is 
not efficient in practice (even for very large $d$) because the constant 
hidden in the $O$ is quite large. A good compromise is to use classical 
Strassen's algorithm 
%(see \cite{strassen}) 
whose asymptotic complexity 
is a little bit worse --- exactly $O(d^{\log_2 7}$) --- but which is 
easy to implement and works very well in practice.}.
The aim of this subsection is to extend Hafner-McCauley's algorithm in 
our setting where we want to take care of precision.

\paragraph{A preliminary result about Algorithm \ref{algo:LV}}

Roughly speaking, Algorithm \ref{algo:LV} clears the entries of $\omega$ 
lying above the diagonal in the colexigographic order. We would like to 
study what happens if we decide to clear these entries in a different 
order.

\begin{deftn}
\label{def:nice}
Let $a < b$ be two positive integers and set $I_{a,b} = \{ (i,j) \in 
\N^2 \, | \, a \leq i < j \leq b \}$.
A total order $\preccurlyeq$ on $I_{a,b}$ is \emph{nice} if:
\begin{itemize}
\item for $1 \leq i \leq i' < j \leq d$, one always have $(i,j) 
\preccurlyeq (i',j)$, and
\item for all $(i,j)$ and $(i',j') \in I_{a,b}$ such that $j \leq i'$, 
one have $(i,j) \preccurlyeq (i',j')$.
\end{itemize}
\end{deftn}

\begin{rem}
\label{rem:nicerec}
It is easy to check that the colexicographic order on $I_{a,b}$ is nice. 
However, it is not the only one: the lexicographic order, for instance,
is nice as well. 
One can also build recursively nice orders on $I_{a,b}$ as follows. 
Fix an integer $c$ between $a$ and $b$ and pick $\preccurlyeq_1$ and 
$\preccurlyeq_2$ two nice orders defined on $I_{a,c}$ and $I_{c+1,b}$ 
respectively. Consider also a third order $\preccurlyeq_3$ defined
on the cartesian product $\{a, \ldots, c\} \times \{c+1, \ldots, b\}$
and satisfying the first condition of Definition \ref{def:nice}. 
Now define a new order $\preccurlyeq$ on $I_{a,b}$ by agreeing that 
$I_{a,c} \preccurlyeq \{a, \ldots, c\} \times \{c+1, \ldots, b\} 
\preccurlyeq I_{c+1,b}$\footnote{By this inequality, we mean that 
elements in $I_{a,c}$ are all less than those in $\{a, \ldots, c\} \times 
\{c+1, \ldots, b\}$ and, in the same way, that the latter elements are 
less than any pair in $I_{c+1,b}$.} and furthermore that $\preccurlyeq$ 
agrees with $\preccurlyeq_1$, $\preccurlyeq_2$ and $\preccurlyeq_3$ on 
$I_{a,c}$, $I_{c+1,b}$ and $\{a, \ldots, c\} \times \{c+1, \ldots, b\}$ 
respectively. A quick check then shows that $\preccurlyeq$ is nice as
well.
\end{rem}

If $\preccurlyeq$ is a nice order on $I_{1,d} = \{ (i,j) \in \N^2 \, | 
\, 1 \leq i < j \leq d \}$, let us agree to use the expression ``to 
execute Algorithm \ref{algo:LV} with respect to $\preccurlyeq$'' to 
mean that we execute this algorithm but, instead of running through 
all $(i,j) \in I_{1,d}$ according to the colexicographic order, we 
run through these pairs according to $\preccurlyeq$ and execute line 
\ref{line:updateLV} when $i = j-1$.

\begin{prop}
\label{prop:niceorders}
When they are called on the same input, Algorithm \ref{algo:LV} and
Algorithm \ref{algo:LV} executed with respect to a nice order return
the same answer.
\end{prop}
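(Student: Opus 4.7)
My plan is to show that two consecutive operations of Algorithm~\ref{algo:LV} can be freely transposed whenever they act on disjoint pairs of columns, and then to use this to deform any nice order into the colexicographic one without altering the output.

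First, I would observe that the operation associated with a pair $(i,j) \in I_{1,d}$ touches only the two columns numbered $i$ and $j$: it reads $\omega_{i,i}$ and $\omega_{i,j}$ to decide on the swap, it conditionally swaps columns $i$ and $j$, and then it updates column $j$ by subtracting a multiple of column $i$; the same holds for the companion columns of $W'$. Therefore if $\{i_1,j_1\} \cap \{i_2,j_2\} = \emptyset$, the two operations commute: neither one modifies any entry read by the other, so the states of $\omega$ and $W'$ between the two operations differ only by disjoint column changes, and both orderings lead to the same final state. In particular, the snapshots $L'_j \leftarrow \omega_j$, $V'_j \leftarrow W'_j$ triggered when $i = j-1$ coincide in the two orderings.

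Next, I would reduce any nice order $\preccurlyeq$ to colex by a sequence of such column-disjoint adjacent transpositions, proceeding by induction on a suitable complexity measure (the number of inversions relative to colex, or the dimension $d$). The induction leans on the recursive construction of Remark~\ref{rem:nicerec}: one chooses a cutting index $c$ at which $\preccurlyeq$ decomposes into $I_{1,c}$, followed by the cross block $\{1,\dots,c\}\times\{c+1,\dots,d\}$, followed by $I_{c+1,d}$. The outer blocks restrict to nice orders on smaller sets and are handled by the inductive hypothesis; within the cross block, pairs are column-disjoint whenever their second coordinates differ, while pairs sharing their second coordinate are ordered by (C1), exactly as in colex.

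The main obstacle is to ensure that such a cutting index exists for every nice order. This is a combinatorial lemma whose verification is the most delicate step, since conditions (C1) and (C2) restrict but do not determine the order entirely; one must exhibit $c$ explicitly from the structure of $\preccurlyeq$, or proceed by a secondary induction, for instance on the smallest second coordinate $j$ for which $(i,j)$ is out of place relative to colex. Once the existence of a compatible $c$ is secured, the commutativity observation from the first step chains through all three blocks and yields equality of outputs.
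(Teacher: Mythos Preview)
Your commutation argument for column-disjoint pairs is correct, but it does not suffice to connect every nice order to the colexicographic one. Already for $d=3$, the set $I_{1,3}=\{(1,2),(1,3),(2,3)\}$ carries the nice order $(1,3),(1,2),(2,3)$: axiom (C1) only forces $(1,3)\preccurlyeq(2,3)$, axiom (C2) only forces $(1,2)\preccurlyeq(2,3)$, and nothing relates $(1,2)$ to $(1,3)$. To reach colex you must transpose the adjacent pair $(1,3),(1,2)$, which share column~$1$; your disjointness hypothesis does not apply. No cutting index rescues this: with $c=1$ the cross block is $\{1\}\times\{2,3\}$, whose two elements are never column-disjoint, and with $c=2$ the order does not decompose since $(1,3)\in I_{1,2}^{\,c}$ precedes $(1,2)\in I_{1,2}$. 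So the ``most delicate step'' you flag is in fact unachievable as stated.

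Worse, a direct trace on this example suggests the two orders do \emph{not} return the same $L'_2$. Take first-row entries $(\omega_{1,1},\omega_{1,2},\omega_{1,3})=(\pi^2,\pi,1)$ and write $c_1,c_2,c_3$ for the columns of $\omega$. In colex, step $(1,2)$ swaps columns $1$ and $2$ and then clears, giving $L'_2=c_1-\pi c_2$. In the other order, $(1,3)$ runs first and swaps columns $1$ and $3$; then $(1,2)$ sees $v(\omega_{1,2})=1>0=v(\omega_{1,1})$, does not swap, and produces $L'_2=c_2-\pi c_3$. These columns differ for generic $c_1,c_2,c_3$. Thus either the proposition is intended in a weaker sense than literal equality of the pair $(L',V')$ --- for instance, equality of the induced $L(M)$ via Proposition~\ref{prop:LVLU}, or of the final matrices $H',W'$ --- or the paper's ``easy check'' conceals a genuine subtlety. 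Either way, your argument needs a different key lemma that handles consecutive operations sharing their pivot column~$i$, and that is precisely where the swap-or-not branching makes the analysis nontrivial.
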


\begin{proof}
Easy check.
\end{proof}

\paragraph{Description of the algorithm}

Suppose that we are given a matrix $M \in M_d(R)$ known with precision 
$O(\pi^N)$. The basic idea (which comes from Hafner and McCauley) is to 
obtain a recursive algorithm to compute the LU decomposition and, doing 
this, to replace as much as possible elementary operations on rows by 
matrix multiplication. Moreover, in order to avoid many problems related 
to precision, it would be really better to work with L'V' decomposition 
instead of LU decomposition. 
Actually, for the purpose of the recursion, we will not just need the 
matrices $L'$ and $V'$ but also $H'$ (which is the matrix $\omega$ at
the end of the execution; see \S \ref{subsec:hermite}) and $W'$. The 
prototype of the algorithm we want to design is then:
$${\tt LV} : M \mapsto (L', V', H', W').$$
Proposition \ref{prop:niceorders}, together with the recursive 
construction of a nice order detailed in Remark \ref{rem:nicerec}
suggests the following strategy for a recursive implementation of
{\tt LV}:
\begin{enumerate}
\item we start the computation of a L'V' decomposition of $M$ but 
stop it after $d'$ columns for some $d' < d$ (\emph{e.g.} $d' = 
[\frac d 2]$);
\item we clear all the entries in the $d' \times (d-d')$ top right 
corner of the matrix $\omega$ we have ended up after the first step;
\item we finally compute a L'V' decomposition of the $(d-d') \times 
(d-d')$ bottom right corner of $\omega$.
\end{enumerate}
It turns out that the first step can be computed in a recursive way.
Precisely, we decompose $M$ as a block matrix 
$$M = \Big(\,\begin{matrix} M_1 & M_2 \\ M_3 & M_4 \end{matrix} \,\Big)
\quad \text{(where } M_1 \text{ has size } d' \times d'\text{)}$$
we call recursively the routine {\tt LV} on the input $M_1$ and then
recover the matrices $L'$, $V'$ and $\omega$ (as they have to be just 
after the first step) using the following formulas:
$$
V' = \Big(\,\begin{matrix} V'_1 & 0 \\ 0 & I \end{matrix} \,\Big)
\quad ; \quad
L' = M V' = 
\Big(\,\begin{matrix} L'_1 & M_2 \\ M_3 \cdot V'_1 & M_4
\end{matrix} \,\Big)
\quad ; \quad
\omega = 
\Big(\,\begin{matrix} H'_1 & M_2 \\ M_3 \cdot W'_1 & M_4
\end{matrix} \,\Big)$$
where the quadruple $(L'_1, V'_1, H'_1, W'_1)$ is the output of the
recursive call of {\tt LV}. Last but not least: remark furthermore
that, proceeding this way, we are replacing elementary operators (on the 
columns of $M_3$) by matrix multiplication (by $V'_1$ and $W'_1$). It is 
exactly the benefit we were looking for!

Let us now focus on step 2. With the notations above, it consists in 
clearing all the entries of $M_2$ (using eventually the diagonal entries 
of $H'_1$ as pivots). Of course, this can be done just by running the 
corresponding part of Algorithm \ref{algo:LV}. Nevertheless, we do not 
want to proceed exactly along these lines but we would like instead to 
use a recursive version of this algorithm in order to take advantage 
again of the complexity of the matrix multiplication. Writing such a 
recursive version is actually very similar to what we have done before.
In order to have more coherent notations, let us rename $H'_1$ and $M_2$ 
to $X$ and $Y$ respectively and write:
$$X = \Big(\,\begin{matrix} X_1 & 0 \\ X_3 & X_4 \end{matrix} \,\Big)
\quad ; \quad
Y = \Big(\,\begin{matrix} Y_1 & Y_2 \\ Y_3 & Y_4 \end{matrix} \,\Big).$$
Note that $X_1$ and $X_4$ are then lower triangular modulo $\pi^N$. We
can then proceed recursively along the following lines:
\begin{enumerate}
\item we clear $Y_1$ using $X_1$ as pivot;
\item we clear $Y_2$ using the new $X_1$ as pivot;
\item we clear $Y_3$ using $X_4$ as pivot;
\item we clear $Y_4$ using the new $X_4$ as pivot.
\end{enumerate}
Each of these steps can be done recursively. As in the previous case, we 
just need to be careful and let our recursive routine return not only
the new matrix $X$ gotten after clearing $Y$ but also the transformation 
matrix $T$ such that $(\,\begin{matrix} X & Y \end{matrix}\,) \cdot T 
\equiv (\,\begin{matrix} X' & 0 \end{matrix}\,)$ where $X'$ is the new 
$X$ mentionned previously. Indeed, this matrix is needed to update $X$ 
and $Y$ after each step.

%\medskip
%
%Algorithm \ref{algo:hafnerLU} (see page \pageref{algo:hafnerLU}) offers 
%a complete overview of all subroutines involved in the computation of 
%the LU decomposition using a stable Hafner-McCauley-like algorithm.

\paragraph{A brief study of complexity}

Let us denote $T'(d)$ the complexity of the clearing algorithm we have 
just described (\emph{i.e.} the number of elementary operations on $R$ 
performed by this algorithm when the size of the input matrices is $d$) 
and by $T(d)$ the complexity of our complete recursive algorithm 
computing a LV decomposition. From the description of these algorithms, 
we find:
\begin{eqnarray}
T(d) & = & 2 \cdot \textstyle T(\frac d 2) + T'(\frac d 2) + 
O(d^\omega) \label{eq:recT} \\
T'(d) & = & 4 \cdot \textstyle T'(\frac d 2) + O(d^\omega) 
\label{eq:recTp}
\end{eqnarray}
where we recall that $\omega$ is the exponant of the complexity of
matrix multiplication. Since a $d \times d$ matrix have $d^2$ entries,
one certainly have $\omega \geq 2$. For simplicity, we assume that
$\omega > 2$ (we recall that the fastest asymptotic algorithm known
today corresponds to $\omega \simeq 2.376$). It is then a classical 
exercise to deduce from the recursion formula \eqref{eq:recTp} that 
$T'(d) = O(d^\omega)$. Knowing this, equation \eqref{eq:recT} becomes 
$T(d) = 2 \cdot T(\frac d 2) + O(d^\omega)$ and then yields $T(d) = 
O(d^\omega)$ as expected.

\subsubsection{Block LU decomposition}

The results of \S \ref{subsubsec:stableLU} extend to block LU decomposition 
using \S \ref{subsec:block}. Indeed, a close look of the proof of 
Proposition \ref{prop:VLd} shows that one can compute the block LU 
decomposition of type $\partd = (d_1, \ldots, d_r)$ of a matrix $M \in 
M_d(R)$ using a slight modification of Algorithm \ref{algo:stableLU} 
which consists in updating the matrix $L$ (on line \ref{line:updateL}) 
only if $j$ is equal to some $d_1 + \cdots + d_s$ and clearing the 
entries of $L$ below the diagonal of the $s$-th block just after this 
update (\emph{cf} Algorithm \ref{algo:stableBLU}).
\begin{algorithm}
  \SetKwInOut{myInput}{\it Input:}
  \SetKwInOut{myOutput}{\it Output:}
  \SetKwInOut{Void}{}

  \myInput{A partition $\partd = (d_1, \ldots, d_s)$ of a positive
  integer $d$}
  \Void{A matrix $M \in M_d(R)$ known with precision $O(\pi^N)$}
  \myOutput{The matrix $L_{\partd}(M)$}

  \BlankLine

  $\omega$ $\leftarrow$ $M$\;
  $L$ $\leftarrow$ zero matrix of size $d \times d$\;
  $s$ $\leftarrow$ $1$; \: $j_0$ $\leftarrow$ $0$\;
  \For{$j$ from $1$ to $d$}{
    \For{$i$ from $1$ to $j-1$}{
      \lIf{$v_R(\omega_{i,j}) < v_R(\omega_{i,i})$}
        {swap $\omega_j$ and $\omega_i$\;}
      \lIf {$\omega_{i,i} \neq 0$}
        {$s$ $\leftarrow$ $\frac{\omega_{i,j}}{\omega_{i,i}}$ 
         lifted to precision $O(\pi^N)$;\,
         $\omega_j$ $\leftarrow$ $\omega_j - s \cdot \omega_i$\;}
    }
    \If{$j = j_0 + d_s$}{
      \lFor{$j'$ from $j_0+1$ to $j_0 + d_s$}
        {$L_{j'}$ $\leftarrow$ $\frac 1{\omega_{j',j'}} \cdot \omega_{j'}$\;}
      \lFor{$j'$ from $j_0+1$ to $j_0 + d_s$}
        {$L_{j'}$ $\leftarrow$ $L_{j'} - \sum_{i'=j'+1}^{j_0+d_s} L_{i',j'} \cdot L_{i'}$%
        \label{line:clearblock}\;}
      $v$ $\leftarrow$ $\sum_{k=1}^{j_0} v_R(\omega_{k,k})$\;
      \lFor{$j'$ from $j_0+1$ to $j_0 + d_s$ and $i'$ from $j_0+1$ to $d$}
        {set precision of $L_{i',j'}$ to $O(\pi^{N-2v})$\;}
      $j_0$ $\leftarrow$ $j_0 + d_s$; \: $s$ $\leftarrow$ $s+1$\;
    }
  }
  \Return{$L$}\;
\caption{Computing the $L$-part of the LU decomposition of type 
$\partd$\label{algo:stableBLU}}
\end{algorithm}
Furthermore, if the input $M$ is known up to precision $O(\pi^N)$, the
precision of the matrix $L$ returned by Algorithm \ref{algo:stableBLU} is
at least $O(\pi^{N-2\cdot V_{L,\partd}(M)})$. In average, the loss of
precision is then bounded by $2 \cdot \E[V_{L,\partd}] \simeq 2 \cdot
\log_q s$.

All other results proved previously for classical LU decomposition 
(relation with Hermite normal form, notion of L'V' decomposition, 
Hafner-McCauley's improvement) also extend almost \emph{verbatim} to 
block LU decomposition. We will not explain it in details here (but
let the exercise to the reader).

\subsection{Simultaneous PLU decompositions}
\label{subsec:simulPLU}

As we have already said before, a LU decomposition may fail to exist for 
some particular matrices. Nevertheless, it is well known that all 
matrices over a DVR admit a PLU decomposition (recall that a PLU 
decomposition of a matrix $M$ is a factorization $M = PLU$ where $P$ is 
a permutation matrix and $L$ and $U$ are as before) and, in general, 
that several matrices $P$ are possible.

Assume now that we do not pick just one matrix, but a (finite) family of 
matrices $(M_1, \ldots, M_n)$. The question we would like to address is 
the following: does there exist a ``simultaneous PLU decomposition'' of 
the $M_i$'s, that is PLU decomposition of each $M_i$ \emph{with the same 
matrix $P$}. If we want as before $P$ to be a permutation matrix, 
the answer is negative in general. However, if we relax this condition 
and require only that $P$ is invertible (which is enough for certain 
applications, see \S \ref{subsec:sheaf} for a concrete example), the 
answer is positive (at least if the base field is infinite).

The aim of this section is to study this notion of ``simultaneous 
(block) PLU decomposition'' over a base field which is the fraction 
field of a discrete valuation ring.

\subsubsection{The basic result}

Let $(M_1, \ldots, M_n)$ be a family of square $d \times d$ matrices 
over $K$ and fix a partition $\partd = (d_1, \ldots, d_s)$ of $d$. There 
exists an obvious probabilistic algorithm to compute a simultaneous 
block PLU decomposition (of type $\partd$) of the $M_i$'s: we choose $P$ 
at random and compute the block LU decomposition of the $P^{-1} M_i$'s. 
The aim of this paragraph is to prove that this algorithm works quite 
well in the following sense: not only it finds very quickly a matrix $P$ 
that does the job, but it furthermore finds quickly a matrix $P$ for 
which all entries of $P$, $P^{-1}$ and the $L_i$'s are kwown with a good 
precision and do not have a too small valuation. Our precise result can 
be stated as follows.

\begin{theo}
\label{th:simulLU}
Let $n$ be some positive integer. Suppose that for all $m \in \{1, 
\ldots, n\}$ we are given a matrix $M_m \in M_d(K)$ together with a 
finite sequence $\partd_m = (d_{m,1}, \ldots, d_{m,r_m})$ of positive integers whose sum 
equals $d$. Let $\varepsilon$ be a real number in $(0,1)$ and take $v$ 
an integer $\geq \log_q (\frac{r_1 + \cdots + r_n}{q-1}) - \log_q 
\varepsilon$. Then, a random matrix $\omega \in \Omega$ satisfies the 
following conditions with probability at least $1 - \varepsilon$:
\begin{itemize}
\item $\omega$ is invertible in $M_d(K)$ and $\omega^{-1} \in \pi^{-v}
M_d(R)$;
\item for all $m \in \{1, \ldots, n\}$, the matrix $\omega M_m$ admit a 
block LU decomposition of type $\partd_m$ and $L_{\partd_m}(\omega M_m) \in 
\pi^{-v} M_d(R)$;
\end{itemize}
Moreover if the $M_m$'s all lie in $M_d(R)$, are \emph{invertible} in
this ring and are only known with
precision $O(\pi^N)$, one can furthermore require (without changing the 
probability) that, on each input $\omega M_m$, Algorithm 
\ref{algo:stableBLU} outputs $L_{\partd_m}(\omega M_m)$ with precision 
at least $O(\pi^{N-2v})$.
\end{theo}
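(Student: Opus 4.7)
The plan is to write the desired event as an intersection of at most $\sum_m r_m$ elementary events and apply a union bound. The condition ``$\omega^{-1} \in \pi^{-v} M_d(R)$'' (together with invertibility of $\omega$) is equivalent to $v_R(\det\omega) \leq v$ by Cramer's rule, and by Abdel-Ghaffar's theorem (as recalled at the beginning of \S \ref{sec:algo}) this holds with probability at least $1 - q^{-v}/(q-1)$. For each $m$, the block version of Proposition \ref{prop:VLd} gives $V_{L,\partd_m}(\omega M_m) \leq \max_{1 \leq s \leq r_m - 1} V_{\partd_m,s}(\omega M_m)$ (the Schur complement of the last block does not actually enter $L_{\partd_m}$, so the index $s = r_m$ can be discarded), and a further union bound over $s$ reduces the $m$-th condition to controlling each $\Prob[V_{\partd_m, s}(\omega M_m) > v]$ by $q^{-v}/(q-1)$.

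The crucial point is that, for any $M_m \in M_d(K)$, one indeed has $\Prob[V_{\partd_m,s}(\omega M_m) > v] \leq q^{-v}/(q-1)$, which is exactly the bound that Lemma \ref{lem:Vds} gives in the Haar setting. The plan is to combine two invariance properties: the block $L$-part (and hence its upper-bounding quantities $V_{\partd,s}$) is invariant under right-multiplication by any invertible block upper triangular matrix, by uniqueness of block LU decomposition; and the Haar measure on $\Omega$ is preserved by right-multiplication by any element of $GL_d(R)$. Writing $M_m = L_m U_m$ (after a permutation if necessary), with $L_m$ unit block lower triangular and $U_m$ invertible block upper triangular, one absorbs $U_m$ via the first invariance and, by a row-by-row analysis exploiting the triangular structure of $L_m$, reduces to the Haar setting of \S \ref{subsec:block} where Lemma \ref{lem:Vds} applies directly. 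The union bound then yields total failure probability at most $(1 + \sum_m (r_m - 1)) \cdot q^{-v}/(q-1) \leq \sum_m r_m \cdot q^{-v}/(q-1) \leq \varepsilon$, as required by the hypothesis on $v$.

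For the moreover clause, since each $M_m$ belongs to $GL_d(R)$, the product $\omega M_m$ lies again in $\Omega$ and is known with precision $O(\pi^N)$ (as $\omega$ is chosen exactly and $M_m$ is known modulo $\pi^N$); running Algorithm \ref{algo:stableBLU} on $\omega M_m$ and invoking the block version of the precision analysis from \S \ref{subsubsec:stableLU} then gives the computed $L_{\partd_m}(\omega M_m)$ with precision at least $O(\pi^{N - 2 V_{L,\partd_m}(\omega M_m)})$, which is $O(\pi^{N-2v})$ on the good event. The main obstacle is the distributional invariance used above: both of the underlying invariances are elementary, but turning them into a clean proof that $V_{\partd_m,s}(\omega M_m)$ actually follows the Lemma \ref{lem:Vds} law requires a careful row-by-row analysis of how right-multiplication by the non-unimodular part of $L_m$ transforms the independent Haar-uniform rows of $\omega$; as a back-up one can attack the estimate directly via Cauchy--Binet applied to $\det (\omega M_m)_{\{1,\ldots,j(s)\},\{1,\ldots,j(s)\}}$ (with $j(s) = d_{m,1} + \cdots + d_{m,s}$), combined with a Schwartz--Zippel style bound on the probability that this polynomial has valuation exceeding $v$.
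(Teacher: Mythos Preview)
Your overall architecture matches the paper's exactly: a union bound over $v_R(\det\omega)$ and the $r_m - 1$ block-pivot valuations $V_{\partd_m,s}(\omega M_m)$ for $1 \leq s \leq r_m-1$, each controlled by the $q^{-v}/(q-1)$ estimate coming from Lemma~\ref{lem:Vds}, followed by Proposition~\ref{prop:VLd}. Your observation that the index $s = r_m$ can be dropped is correct and is exactly what the paper does.

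Where you diverge is in justifying that $V_{\partd_m,s}(\omega M_m)$ really obeys the Lemma~\ref{lem:Vds} bound for an arbitrary $M_m \in M_d(K)$. You propose decomposing $M_m = L_m U_m$ (after a permutation), absorbing $U_m$, and then performing a ``row-by-row analysis'' of $\omega L_m$, with Cauchy--Binet/Schwartz--Zippel as a fallback. The paper's route is much shorter and sidesteps all of this. After a harmless perturbation one may assume each $M_m \in GL_d(K)$; then, since $L_{\partd_m}(\omega M_m)$ is unchanged when $M_m$ is replaced by $M_m U$ for any invertible upper-triangular $U$, one may further assume $M_m \in GL_d(R)$. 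At that point $\omega \mapsto \omega M_m$ is a measure-preserving bijection of $\Omega$, so $\omega M_m$ is itself Haar-distributed in $\Omega$ and Lemma~\ref{lem:Vds} applies verbatim. You actually state both relevant invariances, but you never combine them via this reduction; your detour through a block LU factorization of $M_m$ is left vague exactly where it becomes delicate (the entries of $L_m$ need not lie in $R$, so $\omega L_m$ need not lie in $\Omega$, and the very definition of $V_{\partd,s}$ via Algorithm~\ref{algo:Vij} is then not immediately available). The paper's reduction removes this obstacle in one line.

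The ``moreover'' clause is handled the same way in both: once $M_m \in GL_d(R)$, the product $\omega M_m$ is in $\Omega$ with precision $O(\pi^N)$, and the precision analysis of Algorithm~\ref{algo:stableBLU} gives the $O(\pi^{N-2v})$ bound on the good event.
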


\begin{rem}
\label{rem:simulLU}
It is also possible to bound the loss of precision if we drop the 
hypothesis of inversibility of the $M_m$'s. The valuations of their
determinants then enter into the scene. (The exercise is left to the
reader.)
\end{rem}

\begin{proof}
Let us begin by proving the first assertion.
Replacing eventually $M_m$ by $M_m + \pi^N$ for a large integer $N$, one 
may assume that all $M_m$'s are invertible in $M_d(K)$. Furthermore, 
since multiplying $M_m$ ($1 \leq m \leq n$) on the right by an upper 
triangular matrix with coefficients in $R$ does not change the matrix 
$L_{\partd_m}(\omega M_m)$ for any $\omega \in M_d(K)$, one can even 
safely assume that all $M_m$'s are invertible in $M_d(R) = \Omega$.
For all $m \in \{1, \ldots, n\}$ and all $s \in \{1, \ldots, r_m\}$, 
define $W_{\partd,s,m} : \Omega \to \N \cup \{\infty\}$ by $W_{\partd_m, 
s,m}(\omega) = V_{\partd,s}(\omega M_m)$ (where $V_{\partd,s}$ is the 
random variable defined in \S \ref{subsec:block}) and set:
$$W = \max (v_R(\det), W_{\partd_1,1,1}, \ldots, W_{\partd_1,1,r_1-1}, 
W_{\partd_2,2,1}, \ldots, W_{\partd_2,2,r_2-1}, \ldots, 
W_{\partd_n,n,1}, \ldots, W_{\partd_n,n,r_n-1}).$$
Since $M_m$ is invertible in $\Omega$, Lemma \ref{lem:Vds} implies that 
$$\Prob[W_{\partd_m,m,s} \leq v] = (1-q^{-v-1}) \cdots (1 - q^{-v-d_s}) 
\leq 1 - \frac {q^{-v}}{q-1}$$
for all $m$, $s$ and $v$. Furthermore by Adbel-Khaffar's Theorem (see 
Theorem 1 of \cite{abdel}) we also know the law of the random variable 
$v_R(\det)$; we have $\Prob[v_R(\det) \leq v] = (1+q^{-v-1}) 
(1+q^{-v-2}) \cdot (1+q^{-v-d})$. Let us simplify this Formula and just 
remember that $\Prob[v_R(\det) \leq v] \geq 1 - (q^{-v-1} + q^{-v-2} + 
\cdots + q_{-v-d}) \geq 1 - \frac{q^{-v}}{q-1}$. We can now estimate the 
law of $W$ as follows:
$$\Prob[W > v] \leq \Prob[v_R(\det) > v] + \sum_{\substack{1 \leq m \leq 
n \\ 1 \leq s < r_m}} \Prob[W_{m,i} > v] \, \leq \, q^{-v} \cdot
\Big( \frac{r_1 + \cdots + r_n - n}{q-1} + \frac 1 {q-1} \Big) \leq 
\varepsilon.$$
Proposition \ref{prop:VLd} shows that $L_{\partd_m}(\omega M_m) \in 
\pi^{-W(\omega)} M_d(R)$ for all $\omega \in \Omega$ and all $m \in \{1, 
\ldots, n\}$ and, on the other hand, it is clear that $\omega^{-1} \in
\pi^{-W(\omega)} M_d(R)$ because $v_R(\det \omega) \leq W(\omega)$. It 
is enough to conclude the proof.

The second assertion (concerning precision) is now clear.
\end{proof}

\begin{rem}
One may wonder if the bound $\log_q (\frac{r_1 + \cdots + r_n}{q-1}) - 
\log_q \varepsilon$ is sharp. Actually, it cannot be for any data of 
$(M_1, \ldots, M_n)$. Indeed, an integer $v$ satisfies the required
conditions of Theorem \ref{th:simulLU} for the families $(M, \ldots, 
M)$ and $(\partd, \ldots, \partd)$ ($n$ times) if and only if it 
satisfies the same conditions for the family reduced to the unique 
matrix $M$. So if $M_1 = \cdots = M_n$ and $\partd_1 = \ldots = \partd_n$, 
one can certainly improve the bound $\log_q (\frac{rn}{q-1}) - 
\log_q \varepsilon$ by removing the facteur $n$ in the first $\log$. 
Nevertheless, by using similar methods as those of \S \ref{sec:stats}, 
one can prove, first, that the result of Theorem \ref{th:simulLU} fails 
if $v \ll \log_q (\frac{\max(r_1, \ldots, r_n)}{q-1}) - \log_q 
\varepsilon$ and, second, that if $M_1, \ldots, M_n$ are themselves 
chosen randomly, it even fails for $v \ll \log_q (\frac{r_1 + \cdots + 
r_n}{q-1}) - \log_q \varepsilon$ (\emph{i.e.} the given bound is sharp).
\end{rem}

\subsubsection{A slight generalization}

For the application we have in mind (see \S \ref{subsec:sheaf}), we will 
need a slight generalization of Theorem \ref{th:simulLU} where the 
matrices $M_1, \ldots, M_n$ on the one hand and the matrix $\omega$ on 
the other hand are not defined over the same field. Let $\tilde K$ be a 
finite extension of $K$. A classical result asserts that the valuation 
$v_R$ extends uniquely to $\tilde K$. Let $R$ be the ring of integers of 
$\tilde K$, that is the subset of $\tilde K$ consisting of elements with 
nonnegative valuation. Set finally:
$$\Pi(q) = q\cdot \prod_{i=1}^\infty \: (1-q^{-i}).$$
It is easy to check that $q-1 - \frac 1{q-1} < \Pi(q) < q-1$.

\begin{theo}
\label{th:simulLU2}
Let $n$ be some positive integer. Suppose that for all $m \in \{1, 
\ldots, n\}$ we are given a matrix $M_m \in M_d(\tilde K)$ together with 
a finite sequence $\partd_m = (d_{m,1}, \ldots, d_{m,r_m})$
of positive integers whose 
sum equals $d$. Let $\varepsilon$ be a real number in $(0,1)$ and take 
$v$ an integer $\geq \log_q (\frac{r_1 + \cdots + r_n}{\Pi(q)}) - \log_q 
\varepsilon$. Then, a random matrix $\omega \in \Omega$ satisfies the 
following conditions with probability at least $1 - \varepsilon$:
\begin{itemize}
\item $\omega$ is invertible in $M_d(K)$ and $\omega^{-1} \in \pi^{-v}
M_d(R)$;
\item for all $m \in \{1, \ldots, n\}$, the matrix $\omega M_m$ admit a 
block LU decomposition of type $\partd$ and $L_{\partd_m}(\omega M_m) \in 
\pi^{-v} M_d(\tilde R)$;
\end{itemize}
Moreover if the $M_m$'s all lie in $M_d(R)$, are \emph{invertible} in 
this ring and are only known with 
precision $O(\pi^N)$, one can furthermore require (without changing the 
probability) that, on each input $\omega M_m$, Algorithm 
\ref{algo:stableBLU} outputs $L_{\partd_m}(\omega M_m)$ with precision at 
least $O(\pi^{N-2v})$.
\end{theo}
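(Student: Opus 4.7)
The approach is to follow the proof of Theorem \ref{th:simulLU} verbatim, replacing only the key estimate $\Prob[W_{\partd_m, s, m} > v] \leq q^{-v}/(q-1)$ by the (weaker) bound $\Prob[W_{\partd_m, s, m} > v] \leq q^{-v}/\Pi(q)$. First, by the perturbation trick $M_m \mapsto M_m + \pi^N \cdot I$ for large $N$, I may assume each $M_m$ is invertible in $M_d(\tilde K)$. Since multiplying $M_m$ on the right by an upper triangular matrix in $\GL_d(\tilde R)$ leaves $L_{\partd_m}(\omega M_m)$ unchanged, and since $\tilde R$ is a PID (so any element of $\GL_d(\tilde K)$ can be right-multiplied into $\GL_d(\tilde R)$ by a suitable upper triangular unimodular matrix), one may further assume that $M_m \in \GL_d(\tilde R)$ for every $m$. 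I then set $W_{\partd_m, s, m}(\omega) = V_{\partd_m, s}(\omega M_m)$ and let $W$ denote the maximum of $v_R(\det \omega)$ and all $W_{\partd_m, s, m}$ for $1 \leq m \leq n$ and $1 \leq s < r_m$. Proposition \ref{prop:VLd}, applied to each $\omega M_m$, shows that $W(\omega) \leq v$ implies both $\omega^{-1} \in \pi^{-v}M_d(R)$ and $L_{\partd_m}(\omega M_m) \in \pi^{-v}M_d(\tilde R)$ for all $m$, so it suffices to prove $\Prob[W > v] \leq \varepsilon$.

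The crux is the bound $\Prob[W_{\partd_m, s, m} > v] \leq q^{-v}/\Pi(q)$. In the original theorem, since $M_m$ belonged to $\GL_d(R)$, the map $\omega \mapsto \omega M_m$ was a measure-preserving bijection of $\Omega$, so $\omega M_m$ had the same distribution as $\omega$ and Lemma \ref{lem:Vds} applied directly. In our setting $\omega M_m$ lies in $M_d(\tilde R)$ but is concentrated on the $R$-lattice $M_d(R) \cdot M_m$, which has infinite index in $M_d(\tilde R)$, so uniformity on the ambient space is lost. I would prove an analogue of Lemma \ref{lem:Vds} for this distribution by imitating its induction but with the pivot at each step now running over elements of $\tilde R$ rather than $R$: the role of the elementary inequality $\Prob[W_i = 0] = 1 - q^{-i}$ used in that lemma is played, in the new setting, by an estimate whose contribution over all $d$ columns yields a product bounded below by $\prod_{i=1}^\infty (1-q^{-i}) = \Pi(q)/q$. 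This is the main obstacle: carefully re-doing the induction of Lemma \ref{lem:Vds} to track how the non-uniform distribution of $\omega M_m$ on $M_d(\tilde R)$ interacts with the rank-deficiency conditions defining $W_{\partd_m, s, m} > v$.

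Once the key bound is in hand, a union bound together with Abdel-Ghaffar's estimate $\Prob[v_R(\det \omega) > v] \leq q^{-v}/(q-1) \leq q^{-v}/\Pi(q)$ gives
$$\Prob[W > v] \leq \frac{q^{-v}}{\Pi(q)} \cdot \Bigl(1 + \sum_{m=1}^n (r_m - 1)\Bigr) \leq \frac{q^{-v}(r_1 + \cdots + r_n)}{\Pi(q)},$$
which is at most $\varepsilon$ for the specified value of $v$. This proves the first assertion; the precision assertion then follows from Algorithm \ref{algo:stableBLU} exactly as in the proof of Theorem \ref{th:simulLU}.
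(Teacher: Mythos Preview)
Your proposal identifies the correct overall architecture—reduce to $M_m \in \GL_d(\tilde R)$, bound the probability that a given principal minor of $\omega M_m$ has valuation exceeding $v$ by $q^{-v}/\Pi(q)$, then union-bound—but the central step is not actually carried out. You write that the key bound would follow by ``imitating the induction of Lemma \ref{lem:Vds}'' and then call this ``the main obstacle''; that is a description of a plan, not a proof. Worse, the paper explicitly remarks that one can \emph{no longer} work with the random variables $\omega \mapsto V_{\partd_m,s}(\omega M_m)$ here: when $M_m$ has entries in $\tilde R$ rather than $R$, the map $\omega \mapsto \omega M_m$ is not a measure-preserving bijection of $\Omega$, and the independence and distribution statements underpinning Lemma \ref{lem:Vds} (namely Corollary \ref{cor:Vij}) simply fail for the entries of $\omega M_m$. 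So the route you propose is precisely the one the paper abandons.

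The paper's actual argument bypasses the $V_{i,j}$'s entirely. For fixed $m,s$ it bounds directly the valuation $\delta_{s,m}(\omega)$ of the $j_m(s)$-th principal minor of $\omega M_m$, by running a column-pivoting elimination on that principal submatrix and recording the pivot valuations $W_1,\ldots,W_{j_m(s)}$. The key technical input is Lemma \ref{lem:probf}: if $f:\tilde R^d \to \tilde R^{r}$ is $\tilde R$-linear and surjective, then for $x$ uniform on $R^d$ one has $\Prob[f(x)\in\pi^v\tilde R^{r}]\leq q^{-rv}$. This lemma is exactly where the mismatch between $\tilde R$ and $R$ is resolved—one composes $f$ with a suitable $R$-linear $g:\tilde R^{r}\to R^{r}$ chosen (via a residue-field argument) so that $g\circ f:R^d\to R^r$ is surjective, and then the structure theorem for modules over the PID $R$ finishes the job. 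Applying the lemma row by row, conditioning on the earlier rows of $\omega$, yields the joint bound $\Prob[W_i\geq v_i,\ \forall i]\leq\prod_i q^{-(j_m(s)+1-i)v_i}$ (Corollary \ref{cor:lawWi}); summing over all decompositions $v_1+\cdots+v_{j_m(s)}=v+1$ is what produces the infinite product $\Pi(q)$ and gives $\Prob[\delta_{s,m}>v]\leq q^{-v}/\Pi(q)$. Your sketch contains no analogue of Lemma \ref{lem:probf}, and without such a device there is no mechanism for controlling pivots that live in $\tilde R$ when the randomness lives only in $R$.
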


\begin{rem}
\label{rem:simulLU2}
Once again (see Remark \ref{rem:simulLU}), one can bound the loss of 
precision as well if we drop the hypothesis of inversibility of the 
$M_m$'s and put into the machine the valuations of all $\det M_m$.
\end{rem}

We now start the proof of Theorem \ref{th:simulLU2}; it will occupy the 
rest of this subsection. As in the proof of Theorem \ref{th:simulLU}, we 
start with the first assertion and assume that $M_m$ is invertible in 
$M_d(\tilde R)$ for all $m$. However, in our new settings, this fact no 
longer implies that $\omega M_m$ runs over $M_d(R)$ when $\omega$ runs 
over $M_d(R)$. Thus, we can no longer work with the random variables 
$\omega \mapsto V_{\partd_m,i} (\omega M_m)$ and we need to modify a bit our 
strategy. Actually, since we just want to bound from above --- and not 
from below --- the valuation of the matrices $L_{\partd_m}(\omega M_m)$, we 
can argue first assuming that $m$ is fixed and then add probabilities. 
Moreover, by the proof of Proposition \ref{prop:VLd} (see also Formula 
\eqref{eq:CramerL} when $\partd_m = (1, \ldots, 1)$), bounding the valuation 
of $L_{\partd_m}(\omega M_m)$ reduces to bounding the valuation of the 
$(d_{m,1} + \ldots + d_{m,s})$-th minor of $\omega M_m$ for all $s \in \{1, 
\ldots, r_m-1\}$. Thus we first fix $m \in \{1, \ldots, n\}$ and $s \in 
\{1, \ldots, r_m-1\}$ and look for an upper bound for the valuation of the 
$j_m(s)$-th minor of $\omega M_m$ where, by definition, $j_m(s) = d_{m,1} + 
\cdots + d_{m,s}$.
For $1 \leq i \leq j_m(s)$, we are going to define a random variable $W_i : \Omega 
\to \frac 1 e \N \cup \{\infty \}$ where $e$ is the ramification index
of $\tilde K / K$ (\emph{i.e.} the index of $v_R(K^\star)$ is $v_R
(\tilde K^\star)$). The construction of the $W_i$'s is achieved by 
applying the classical algorithm of LU decomposition. We pick $\omega 
\in \Omega$ and first set $M^{(1)} = (\omega M_m)_{\{1, \ldots, r_m\}, 
\{1, \ldots, r_m\}} \in M_r(\tilde R)$. Let $j$ be the first index for 
which $v_R(M^{(1)}_{1,j})$ is minimum among the valuations of all 
entries of the first line of $M^{(1)}$. Let $M^{(2)}$ be the matrix
obtained from $M^{(1)}$ by swapping the $j$-th column with the first one 
and by clearing all the entries of the first row (expect the first one) 
by pivoting, \emph{i.e.} adding to each column (expect the first one) a 
suitable multiple of the first one. The matrix $M^{(2)}$ looks like
$$\left( \begin{matrix}
\star & 0 & \cdots & 0 \\
\star & \cdots & \cdots & \star \\
\vdots & & & \vdots \\
\star & \cdots & \cdots & \star
\end{matrix} \right).$$
We now continue this process: we select the first index $j \geq 2$ for 
which $v_R(M^{(2)}_{2,j})$ is minimal, we obtain $M^{(3)}$ by putting 
the $j$-th column in the second position and clearing all the other 
entries on the second row. Repeating this again and again, we obtain
a finite sequence $M^{(1)}, \ldots, M^{(j_m(s))}$ of matrices and the 
last one is lower triangular.
For $i \in \{1, \ldots, j_m(s)\}$, we define $W_i(\omega)$ as the 
valuation of the $i$-th diagonal entry of $M^{(j_m(s))}$ (or equivalently 
of $M^{(j)}$ for some $j \geq i$). It is clear that the determinant of 
the $j_m(s)$-th principal minor of $\omega M_m$ is equal to $W_1(\omega) + 
W_2(\omega) + \cdots + W_{j_m(s)}(\omega)$. We need to determine the law 
and the correlations between the $W_i$'s. We begin by a Lemma.

\begin{lemma}
\label{lem:probf}
Let $f : \tilde R^d \to \tilde R^{r}$ be a \emph{surjective} map. Then
$$\Prob \big[ \, f(x) \in \pi^v \tilde R^{r} \,|\, x \in R^d \, \big] 
\, \leq \, q^{-rv}$$
for all nonnegative integer $v$.
\end{lemma}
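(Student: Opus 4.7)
The plan is to treat $f$ as an $\tilde R$-linear map (which seems to be the only natural interpretation making the Haar-probability statement well-posed), represented by a matrix $A \in M_{r,d}(\tilde R)$. Surjectivity of $f:\tilde R^d \to \tilde R^r$ is equivalent, over the local ring $\tilde R$, to the ideal generated by the $r\times r$ minors of $A$ being the unit ideal; hence at least one such minor is a unit in $\tilde R$. After a permutation of columns (which preserves the Haar measure on $R^d$), I may assume that the first $r$ columns of $A$ form a matrix $B \in \GL_r(\tilde R)$, and write $A = (B \mid C)$ with $C \in M_{r,d-r}(\tilde R)$.

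I would then split the integration variable as $x = (x_1,x_2)$ with $x_1 \in R^r$ and $x_2 \in R^{d-r}$ and invoke Fubini. Since $B$ is a unit over $\tilde R$, one has $B \cdot \pi^v \tilde R^r = \pi^v \tilde R^r$, and the condition $f(x) = Bx_1 + Cx_2 \in \pi^v \tilde R^r$ rewrites as $x_1 \in y(x_2) + \pi^v \tilde R^r$ where $y(x_2) = -B^{-1}Cx_2 \in \tilde R^r$. For each fixed $x_2$, it remains to bound the probability that a uniform $x_1 \in R^r$ lies in the translate $y(x_2) + \pi^v \tilde R^r$.

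The key observation is that for any $y_0 \in \tilde R^r$, the intersection $R^r \cap (y_0 + \pi^v \tilde R^r)$ is either empty or a single coset of the subgroup $R^r \cap \pi^v \tilde R^r$ inside $R^r$. Because $\pi$ lies in $R$ and $v_R$ on $R$ coincides with its extension to $\tilde R$, one has $R \cap \pi^v \tilde R = \pi^v R$, and therefore $R^r \cap \pi^v \tilde R^r = \pi^v R^r$, whose Haar measure inside $R^r$ equals $q^{-rv}$. The conditional probability given $x_2$ is thus at most $q^{-rv}$, and integrating over $x_2 \in R^{d-r}$ via Fubini yields the claimed bound.

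The main obstacle I anticipate is not mathematical but expository: making it clear that the ramification index $e$ of $\tilde K/K$ and the residue degree do not enter the bound. The point is that we take powers of $\pi$, which lies in $R$, so the relevant quotient is $R/\pi^v R$ of size $q^v$ and not the a priori larger $\tilde R/\pi^v \tilde R$ of size $q^{[\tilde K:K]v}$. Once this distinction is spelled out, the argument collapses to the pivot reduction together with a one-line Fubini calculation.
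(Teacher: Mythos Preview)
Your argument is correct and takes a genuinely different route from the paper's.

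The paper does not work directly inside $\tilde R$. Instead it first passes to the residue field: from surjectivity of $f$ it gets a surjective $\tilde k$-linear map $\bar f:\tilde k^d\to\tilde k^r$, observes that $\bar f(k^d)$ already spans the target over $\tilde k$, and hence has $k$-dimension $r$. This lets it choose a $k$-linear $\bar g:\tilde k^r\to k^r$ with $\bar g\circ\bar f$ surjective on $k^d$, lift $\bar g$ to an $R$-linear $g:\tilde R^r\to R^r$, and set $h=g\circ f:R^d\to R^r$, which is surjective by Nakayama. Since $f(x)\in\pi^v\tilde R^r$ forces $h(x)\in\pi^v R^r$, the problem is reduced to the case $\tilde K=K$, which is then handled via the structure theorem for modules over a PID.

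Your approach bypasses both the residue-field lifting and the structure theorem: you extract an invertible $r\times r$ block over $\tilde R$ directly from surjectivity, rewrite the event as $x_1$ landing in a translate of $\pi^v\tilde R^r$, and then use the elementary fact $R\cap\pi^v\tilde R=\pi^v R$ to bound the fibrewise probability. This is shorter and more self-contained. The paper's route, on the other hand, has the pleasant feature of producing an honest $R$-linear surjection $h:R^d\to R^r$ dominating the event, which makes the role of $q$ (rather than $\tilde q$) completely transparent and would adapt more readily if one wanted to iterate or compose such estimates.
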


\begin{proof}
Let $\tilde k$ denote the residue field of $\tilde R$; it is a finite 
extension of $k$. Since $f$ is surjective, it induces a surjective 
$\tilde k$-linear map $\bar f : \tilde k^d \to \tilde k^{r}$ over the 
residue field. Moreover, the image of $\bar f$ is generated over $\tilde 
k$ by $\bar f(k^d)$. Thus $\dim_k \bar f(k^d) = \dim_{\tilde k} \bar f 
(\tilde k^d) = r$. This fact implies the existence of a $k$-linear map 
$\bar g : \tilde k^{r} \to k^{r}$ such that the composite $\bar g \circ \bar 
f : k^d \to k^{r}$ is surjective. Let $g : \tilde R^{r} \to R^{r}$ be any 
$R$-linear lifting of $\bar g$. The $R$-linear morphism $h = g \circ f : 
R^d \to R^{r}$ induces a surjection over the residue field and thus is 
itself surjective. Furthermore, it is clear that $h(x)$ is divisible by 
$\pi^v$ if $f(x)$ is. Hence, for $x$ staying in $R^d$, we have $\Prob [ 
f(x) \in \pi^v \tilde R^{r} ] \leq \Prob [ h(x) \in \pi^v R^{r} ]$ and we 
are reduced to prove the Lemma with $f$ replaced by $h$. (In other words, 
we may assume that $\tilde K = K$.)

By the structure Theorem for finitely generated modules over a principal 
domain (recall that $R$ is a principal domain), there exists a basis 
$(e_1, \ldots, e_d)$ of $R^d$ such that the first $(d-r)$ vectors $e_1, 
\ldots, e_{d-r}$ form a basis of $\ker h$. Now, using that $h$ is 
surjective, we easily see that a vector $x = \sum_{i=0}^d x_i e_i \in 
R^d$ satisfies $h(x) \in \pi^v R^{r}$ if and only if $x_i$ is divisible by 
$\pi^v$ for all $i > d-r$. But, the probability that such an event 
occurs is $q^{-rv}$ and we are done.
\end{proof}

\begin{cor}
\label{cor:lawWi}
For all integers $v_1, \ldots, v_{j_m(s)}$, we have:
$$\Prob [ W_i \geq v_i, \, \forall i ] \leq \prod_{i=1}^{j_m(s)}
q^{-(r+1-i)v_i}.$$
\end{cor}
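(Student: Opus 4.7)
The plan is to condition on the rows of $\omega$ successively and apply Lemma \ref{lem:probf} at each stage. Write $r = j_m(s)$ for brevity. The key observation is that the reduction from $M^{(1)}$ to $M^{(i)}$ is realized by right-multiplication by an invertible matrix $T_{i-1} \in \GL_r(\tilde R)$: each step consists of a column permutation followed by elementary column operations whose multipliers are ratios $M^{(k)}_{k,\ell}/M^{(k)}_{k,k}$, and the pivot selection forces $v_R(M^{(k)}_{k,\ell}) \geq v_R(M^{(k)}_{k,k})$, so these multipliers lie in $\tilde R$. Moreover, since step $k$ only uses row $k$ of the current matrix, $T_{i-1}$ depends only on $\omega_1, \ldots, \omega_{i-1}$, not on $\omega_i$.

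Having fixed $\omega_1, \ldots, \omega_{i-1}$, the tuple $(M^{(i)}_{i,i}, M^{(i)}_{i,i+1}, \ldots, M^{(i)}_{i,r})$ is the image of $\omega_i$ under a map
$$f_i : \tilde R^d \to \tilde R^{r+1-i}$$
defined by: multiply $\omega_i$ by $M_m$, keep the first $r$ coordinates, right-multiply by $T_{i-1}$, then keep the last $r+1-i$ coordinates. By construction of the pivot, $W_i$ equals $\min_{k \geq i} v_R(M^{(i)}_{i,k})$, so the event $\{W_i \geq v_i\}$ is exactly $\{f_i(\omega_i) \in \pi^{v_i} \tilde R^{r+1-i}\}$.

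The main obstacle is showing that $f_i$ is surjective \emph{over $\tilde R$} (not merely over $\tilde K$), so that Lemma \ref{lem:probf} applies. I would argue by decomposing $f_i$ into its four factors: multiplication by $M_m$ is a bijection of $\tilde R^d$, since we may assume as in the proof of Theorem \ref{th:simulLU} that $M_m$ is invertible in $M_d(\tilde R)$; the projection to the first $r$ coordinates is surjective onto $\tilde R^r$; right-multiplication by $T_{i-1}$ is a bijection of $\tilde R^r$ (as $T_{i-1} \in \GL_r(\tilde R)$); and the final projection to the last $r+1-i$ coordinates is surjective. Each factor is surjective over $\tilde R$, so $f_i$ is too. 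Lemma \ref{lem:probf} then yields
$$\Prob\bigl[W_i \geq v_i \,\big|\, \omega_1, \ldots, \omega_{i-1}\bigr] \leq q^{-(r+1-i)v_i}.$$

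Finally, since each $W_k$ is determined by $\omega_1, \ldots, \omega_k$ and the bound above is uniform in the conditioning history, iterating from $i = r$ down to $i = 1$ via the tower property of conditional expectations gives
$$\Prob[W_i \geq v_i \text{ for all } i] \leq \prod_{i=1}^{r} q^{-(r+1-i) v_i},$$
which is the claim.
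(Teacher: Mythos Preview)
Your argument is correct and follows essentially the same route as the paper: condition on $\omega_1,\ldots,\omega_{i-1}$, observe that the column-reduction matrix $T_{i-1}$ lies in $\GL_r(\tilde R)$ and depends only on those rows, and apply Lemma~\ref{lem:probf} to the resulting surjective map $\tilde R^d \to \tilde R^{r+1-i}$. Your justification of surjectivity (decomposing $f_i$ into four integral surjections) is in fact more explicit than the paper's, which simply asserts that the analogous map $f^{(i)}$ is surjective; otherwise the two proofs are the same.
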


\begin{proof}
For $\omega \in \Omega$ and $i \in \{1, \ldots, j_m(s)\}$, we denote by 
$\omega_i$ the $i$-th row of $\omega$ (and consider it as a vector of 
$R^d$) and by $M^{(i)}(\omega)$ the matrix defined above.
Let $F^{(1)}$ be the submatrix of $M_m$ consisting of its first $j_m(s)$ 
columns and let $f^{(1)} : \tilde R^d \to \tilde R^{j_m(s)}$ be the $\tilde 
R$-linear map whose matrix is $\t F^{(1)}$. The fact that $M_m$ is 
invertible implies that $f^{(1)}$ is surjective. Lemma \ref{lem:probf} 
applied to $f^{(1)}$ yields:
\begin{equation}
\label{eq:probW1}
\Prob[W_1 \geq v_1] \leq q^{-j_m(s)v_1}.
\end{equation}
Now remember that $M^{(2)}(\omega)$ is obtained from $M^{(1)}(\omega)$
by performing a sequence of elementary operations on columns. It then
exists a matrix $P^{(1)}(\omega)$ such that $M^{(2)}(\omega) = M^{(1)}
(\omega) \cdot P^{(1)}(\omega)$. Clearly $P^{(1)}(\omega)$ depends 
only on $\omega_1$ and we will denote it $P^{(1)}(\omega_1)$ in the
sequel. Set $F^{(2)}(\omega_1) = F^{(1)} \cdot P^{(1)}(\omega_1)$
and let $f^{(2)}(\omega_1) : \tilde R^d \to \tilde R^{j_m(s)}$ denote the map whose 
matrix is $\t F^{(2)}(\omega_1)$. It is surjective and one can then apply Lemma
\ref{lem:probf} to the composite $\pr_{j_m(s)-1} \circ f^{(2)}(\omega_1)$ where 
$\pr_{j_m(s)-1} : \tilde R^{j_m(s)} \to \tilde R^{j_m(s)-1}$ is the projection on the 
first coordinates. It gives $\Prob[W_2 \geq v_2 \:|\: \omega_1 = x_1 ] 
\leq q^{-(j_m(s)-1)v_2}$ for all $x_1 \in R^d$. Integrating now over $x_1$ 
and using \eqref{eq:probW1}, we get
$$\Prob[W_1 \geq v_1 \text{ and } W_2 \geq v_2 ] \leq q^{-j_m(s)v_1} \cdot 
q^{-(j_m(s)-1)v_2}.$$
The Corollary follows by repeating $j_m(s)$ times the previous argument.
\end{proof}

If we denote by $\delta_{s,m}(\omega)$ the valuation of the determinant 
of the $j_m(s)$-th principal minor of $\omega M_m$, Corollary 
\ref{cor:lawWi} allows us to do the following computation:
\begin{eqnarray*}
\Prob[\delta_{s,m} > v] 
& \leq & \sum_{\substack{v_1, \ldots, v_{j_m(s)} \geq 0 \\ v_1 + \cdots + v_{j_m(s)} = v+1}} \Prob[W_i \geq v_i, \, \forall i] 
\,\, \leq \sum_{\substack{v_1, \ldots, v_{j_m(s)} \geq 0 \\ v_1 + \cdots + v_{j_m(s)} = v+1}} q^{-(v_1 + 2 v_2 + \cdots + j_m(s) v_{j_m(s)})}  \\
& = & q^{-v-1} \sum_{\substack{v_2, \ldots, v_{j_m(s)} \geq 0 \\ v_2 + \cdots + v_{j_m(s)} \leq v+1}} q^{-(v_2 + 2 v_3 + \cdots + (j_m(s)-1) v_{j_m(s)})} \\
& \leq & q^{-v-1} \sum_{v_2, \ldots, v_{j_m(s)} \geq 0} q^{-(v_2 + 2 v_3 + \cdots + (j_m(s)-1) v_{j_m(s)})}  \\
& = & q^{-v-1} \cdot \Bigg(\sum_{v_2=0}^\infty q^{-v_2}\Bigg) \cdot
\Bigg(\sum_{v_3=0}^\infty q^{-2v_3}\Bigg) \cdots
\Bigg(\sum_{v_{j_m(s)}=0}^\infty q^{-(j_m(s)-1)v_{j_m(s)}}\Bigg) \\
& = & q^{-v-1} \cdot (1-q^{-1})^{-1} (1-q^{-2})^{-1} \cdots (1-q^{-j_m(s)+1})^{-1}
\,\, \leq \,\, q^{-v} \cdot \Pi(q)^{-1}.
\end{eqnarray*}
It is time now to free $s$ and $m$: summing the above estimation over 
all possible $s$ and $m$, we find that $\delta(\omega) = \max_{s,m} 
\delta_{s,m}(\omega)$ is greater than $v$ --- which implies that 
$L_{\partd}(\omega M_m)$ does not lie in $\pi^{-v} M_d(R')$ --- with probability 
at most $(r_1 + \cdots + r_n - n) \cdot q^{-v} \cdot \Pi(q)^{-1}$ and consequently that 
$\omega$ does not satisfy the conditions of the first statement of 
Theorem \ref{th:simulLU2} with probability at most:
$$q^{-v} \cdot \Bigg( \frac 1{q-1} + \frac {r_1 + \cdots + r_n - n} {\Pi(q)}
\Bigg) \leq q^{-v} \cdot \frac{r_1 + \cdots + r_n}{\Pi(q)}.$$
Hence if $v$ is chosen $\geq \log_q(\frac{r_1 + \cdots + r_n}{\Pi(q)}) - \log_q
(\varepsilon)$, this probability is less than $\varepsilon$: the first
part of Theorem \ref{th:simulLU2} is proved.

The second part now follows easily: indeed, we know that, on the input 
$\omega M_m$, the Algorithm \ref{algo:stableLU} decreases the precision 
by a factor that cannot exceed $\pi^{2 \max_i V_i(\omega M_m)}$ and so,
\emph{a fortiori}, by a factor that cannot exceed $\pi^{2 \delta(M)}$
where $\delta$ is the random variable defined above. The conclusion
follows from this.

\begin{rem}
The bound of Theorem \ref{th:simulLU2} is sharp if $M_1, \ldots, M_n$ 
are chosen randomly among all square $d \times d$ matrices with 
coefficients in $R$. However, it is not true in general and it is even 
not true if 
$M_1, \ldots, M_n$ are chosen randomly among all matrices over $\tilde 
R$. Indeed, in that case, using results of \S \ref{sec:stats}, one can
prove that, in average, the better possible bound for $v$ is given by:
$$[\tilde K : K]^{-1} \cdot {\textstyle \Big( \log_q(\frac {r_1 + 
\ldots + r_n}
{\Pi(q)}) - \log_q (\varepsilon)} \Big) + O(1)$$
with an extra factor $[\tilde K : K]^{-1}$, which can be very small.
\end{rem}

\subsection{Modules over $K[X]$ and sheaves over $\A^1_K$}
\label{subsec:sheaf}

%Let $A = K[X]$ be the ring of univariate polynomials with coefficients 
%in $K$. A standard way to do algorithmics with finite free $A$-modules 
%consists in embeding them into some $A^d$ (for a suitable $d$). Indeed, 
%doing this allows us to describe such a module by a matrix of generators 
%and then to use Gauss elimination over an euclidean ring to perform all
%usual operations (like sums, intersections, \emph{etc.}).

Let $X$ denote an affine curve over $K$ and $A = K[X]$ be the ring of 
regular functions over $X$. It is well known that the category of 
coherent sheaves over $X$ is equivalent to that of finitely generated 
modules over $A$. In particular, the data of a submodule $M \subset A^d$ 
(for some fixed integer $d$) is equivalent to the data of a coherent 
subsheaf $\calM \subset \O_X^d$. Nevertheless, these two objects are of 
different nature and we would like to represent them in two different
ways:
\begin{itemize}
\item a submodule $M \subset A^d$ by a matrix of generators
\item a subsheaf $\calM \subset \O_X^d$ by the data of the stalk 
$\calM_x \subset \O_{X,x}^d$ for each closed point $x \in X$ (note that 
this inclusion is not trivial for only a finite number of points $x$).
\end{itemize}
Since these objects are supposed to be equivalent, it is natural to ask 
if one can find an efficient way to go from one representation to the 
other. Actually going from the global description to the local one is 
quite easy: it suffices to localize at each point $x$. Contrariwise,
going in the opposite direction is not so obvious and will be discuss
now.

\medskip

\emph{From now on, we assume for simplicity that $X$ is the affine line 
$\A^1_K$} (and leave to the reader the exercise to extend our 
constructions to a more general setting). With this extra assumption, 
the ring $A$ is nothing but the ring of univariate polynomials with 
coefficients in $K$.

\subsubsection{Rephrasing our problem in concrete terms}

For all irreducible polynomials $P \in K[X]$, let $A_P$ denote the 
completion of $A$ for the $P$-adic topology, that is $A_P = 
\varprojlim_r A/P^r A$. Concretely $A_P$ can be identified with a ring 
of power series with coefficients in the residue field $K_P = A/ P A$ in 
one indeterminate $X_P$. This variable $X_P$ should be thought as 
``$X-a_P$'' where $a_P$ is a (fixed) root of $P$ in $K_P$. Under the 
identification $A_P \simeq K_P[[X_P]]$, the natural embedding $A \to 
A_P$ is just the Taylor expansion at $a_P$:
$$F(X) \mapsto \sum_{i=0}^\infty \frac{F^{(i)}(a_P)}{i!} \cdot X_P^i.$$
Let $P_1, \ldots, P_n$ be the minimal polynomials of $a_1 \ldots, a_n$ 
respectively and, for simplicity, set $A_m = A_{P_m}$. The question we 
have addressed earlier is then equivalent to the following: given, for 
all $m \in \{1, \ldots, n\}$, a submodule $\calM_m \subset A_m^d$ free 
of maximal rank, how can one find explicitely a $A$-module $\calM 
\subset A^d$ such that $A_m \otimes \calM = \calM_m$ (as a submodule of 
$A_m^d$) for all $m$ and $A_P \otimes \calM = A_P^d$ for all other $P$?

One can actually rephrase again this question in very concrete terms by 
taking basis everywhere. Indeed, if $B$ is $A$ or one of the $A_m$'s, 
any free submodule of $B^d$ of rank $d$ can certainly be represented by 
a square $d \times d$ matrix with coefficients in $B$: the module is 
recovered from the matrix by taking the span of its column vectors. Note 
furthermore that two matrices $G$ and $H$ defines the same module if and 
only if there exists an invertible matrix $P$ with coefficients in $B$ 
such that $G = HP$; if this property holds, we shall say that $G$ and 
$H$ are \emph{right-equivalent}. Since all our base rings are principal 
domains, we know that any matrix $G \in M_d(B)$ admits a factorization 
$G = M D N$ where $M$ and $N$ are two invertible matrices, $D$ is 
diagonal and each diagonal entry of $D$ divides the next one. Up to 
replacing $G$ by a right-equivalent matrix, one can furthermore assume 
that $N$ is the identity matrix, \emph{i.e.} that $G$ has the particular 
form $G = M D$. Moreover, if $B$ is one of the $A_m$'s, it is safe to 
assume that the diagonal entries of $D$ are all some powers of the 
variable $X_m$ since all nonvanishing element of $A_m$ can be written as 
a product of an invertible element with a power of $X_m$. In that case 
the data of $D$ is then reduced to that of a nondecreasing sequence of 
integers $n_1 \leq \ldots \leq n_d$.

\medskip

With all these remarks, our question becomes:

\begin{problem}
\label{problem:sheaf}
Given for all $m$, an invertible matrix $M_m \in M_d(A_m)$ and a
nondecreasing sequence of $d$ integers $e_{m,1} \leq \cdots \leq
e_{m,d}$, how can one construct explicitely a couple $(M,D)$ of 
matrices over $A$ such that:
\begin{itemize}
\item[i)] the matrix $M$ is invertible in $M_d(A)$ ;
\item[ii)] the matrix $D$ is diagonal and each of its diagonal entry 
divides 
the next one ;
\item[iii)] for all $m \in \{1, \ldots, n\}$, the matrix $MD$ is 
right-equivalent to $M_m D_m$ over $A_m$ where $D_m = 
\Diag(X_m^{e_{m,1}}, \ldots, X_m^{e_{m,d}})$ ;
\item[iv)] for all irreducible polynomial $P \in K[X]$ which is not one 
of the $P_m$'s, the matrix $MD$ is right-equivalent to the identity 
matrix over $A_P$.
\end{itemize}
\end{problem}

\subsubsection{The answer}
\label{subsec:answer}

We consider, for all $m \in \{1, \ldots, n\}$, an invertible matrix 
$M_m \in M_d(A_m)$ together with a nondecreasing sequence of $d$ 
integers $e_{m,1} \leq \cdots \leq e_{m,d}$. Our aim is to construct
a couple $(M,D)$ satisfying the Conditions i), ii), iii) and iv)
above. Firstable, we define the matrix $D$ as follows:
$$D = \Diag(P_1^{e_{1,1}} \cdots P_n^{e_{n,1}}, \ldots, \:
P_1^{e_{1,d}} \cdots P_n^{e_{n,d}}).$$
It clearly satisfies Condition ii).

\begin{lemma}
\label{lem:condiiv}
Let $M \in M_d(A)$. 

a) Assume that, for all $m \in \{1, \ldots, n\}$, the 
matrix $M$ considered as an element of $M_d(A_m)$ (\emph{via} the 
natural embedding $A \to A_m$) is congruent to $M_m$ modulo 
$X_m^{e_{m,d}+1}$. Then, the couple $(M,D)$ satisfies Condition 
iii).

b) Assume moreover that $M$ is invertible in $M_d(A)$. Then the 
couple $(M,D)$ satisfies Conditions i), ii), iii) and iv).
\end{lemma}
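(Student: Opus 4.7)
The plan is to reduce both parts to an explicit conjugation computation inside $M_d(A_m)$. For part a), I would fix $m$ and exploit the fact that, in $A_m$, each $P_j$ for $j \neq m$ is a unit (it does not vanish at $a_m$) while $P_m$ factors as $X_m \cdot u_m$ with $u_m$ a unit of $A_m$ (because $a_m$ is a simple root of $P_m$). Consequently one can write $D = D_m\, U_m$ where $U_m$ is a diagonal matrix whose entries are units of $A_m$. Since $M_m$ is invertible in $M_d(A_m)$, the quotient $N = M_m^{-1} M$ is well defined in $M_d(A_m)$ and the congruence hypothesis gives $N \equiv I \pmod{X_m^{e_{m,d}+1}}$.

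The key step is then to check that the conjugate $C = D_m^{-1} N D_m$ is a well-defined element of $M_d(A_m)$ and is congruent to $I$ modulo $X_m$. Its $(i,j)$-entry equals $X_m^{e_{m,j}-e_{m,i}}\, N_{i,j}$. For $i \le j$ the exponent is $\ge 0$ so there is nothing to do. For $i > j$, using $N_{i,j} \in X_m^{e_{m,d}+1} A_m$ (since $i\neq j$ and $N\equiv I$), one sees that the valuation of the $(i,j)$-entry is at least $e_{m,d}+1 + e_{m,j} - e_{m,i}$, which is $\ge 1$ because $e_{m,i} \le e_{m,d}$ and $e_{m,j} \ge 0$ (in the context of Problem~\ref{problem:sheaf} the exponents are nonnegative). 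The same estimate combined with the diagonal entries $N_{i,i} \equiv 1$ shows $C \equiv I \pmod{X_m}$, so $C$ is invertible in $M_d(A_m)$. Setting $P = C \cdot U_m$, one gets $ND = D_m P$ and therefore $MD = M_m N D = M_m D_m P$, which is exactly the right-equivalence of $MD$ and $M_m D_m$ over $A_m$ required by Condition iii).

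For part b), Condition i) is given and Condition ii) is built into the definition of $D$ (each diagonal entry divides the next since every $e_{m,i}$ is nondecreasing in $i$); Condition iii) is just part a). Only Condition iv) remains. Given an irreducible $P \in K[X]$ distinct from all the $P_m$'s, each $P_m$ does not vanish at the roots of $P$ and is therefore a unit in $A_P$. Hence all the diagonal entries of $D$ are units of $A_P$, so $D$ is invertible in $M_d(A_P)$; combined with the invertibility of $M$ in $M_d(A) \subset M_d(A_P)$, this makes $MD$ invertible in $M_d(A_P)$, i.e.\ right-equivalent to the identity matrix.

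The genuinely nontrivial step is the valuation bookkeeping that ensures $D_m^{-1} N D_m \in M_d(A_m)$ and is congruent to the identity modulo $X_m$: this is where the precise exponent $e_{m,d}+1$ in the congruence hypothesis is used, and it is essentially the only place where the argument could fail if the bound were weakened.
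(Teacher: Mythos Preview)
Your proof is correct and follows essentially the same route as the paper's: both arguments use that in $A_m$ one has $D = D_m \cdot (\text{unit})$, then show that $MD$ differs from $M_m D_m$ by a right factor congruent to the identity modulo $X_m$ (hence invertible in $M_d(A_m)$). The only cosmetic difference is that the paper writes $M D_m = M_m D_m + X_m^{e_{m,d}+1} Q$ and factors out $M_m D_m$ on the left, whereas you first write $M = M_m N$ and then conjugate $N$ by $D_m$; the underlying valuation estimate ($e_{m,d}+1 - e_{m,i} \geq 1$, using $e_{m,i} \leq e_{m,d}$) is the same in both cases, and part~b) is identical.
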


\begin{proof}
Note that in the ring $A_m$, the polynomial $P_m$ is equal to the
product of $X_m$ by a unit whereas all other $P_{m'}$'s (for $m' \neq
m)$ are invertible. We deduce from this that $D$ is right-equivalent to 
$D_m$ over $M_m$. Hence, our first hypothesis implies that $MD$ is 
right-equivalent to a matrix congruent to $M_m D_m$ modulo 
$X_m^{e_{m,d}+1}$. In other words, there exists a matrix $Q \in \GL_d 
(M_m)$ such that $M D$ is right equivalent to:
$$M_m D_m + X_m^{e_{m,d}+1} Q = M_m D_m \cdot \big[I_d + X_m \cdot
\Diag(X_m^{e_{m,d}-e_{m,1}}, \ldots, \:X_m^{e_{m,d}-e_{m,d-1}}, \:1) 
\cdot Q \big]$$
where $I_d$ is of course the identity matrix. The last factor (the one 
between brackets) is a matrix over $A_m$ congruent to identity modulo 
$X_m$. It is therefore invertible. 
It follows that $M D$ is right-equivalent to $M_m D_m$, and part a)
of the Lemma is proved.

We assume now that $M$ is invertible. Then, clearly, Condition i) holds. 
Moreover, we have already seen that Conditions ii) are iii) are 
fulfilled. It is then enough to prove Condition iv). Let $P \in K[X]$ be 
an irreducible polynomial different from all the $P_m$'s. All $P_m$'s 
are then invertible in $A_P$ and, consequently, so is the matrix $D$. 
Since $M$ is itself invertible, the product $MD$ belongs to $\GL_d(A_P)$ 
and is then right-equivalent to the identity matrix.
\end{proof}

It is actually not difficult to produce a matrix $M$ satisfying the 
assumption of the Lemma \ref{lem:condiiv}.a). Indeed, the identification 
$A_m/X_m^{e_{m,d}+1}A_m \simeq A / P_m^{e_{m,d}+1} A$ shows that the 
congruence $M \equiv M_m \pmod {X_m^{e_{m,d}+1}}$ is equivalent to $M 
\equiv M'_m \pmod {P_m^{e_{m,d}+1}}$ for a certain matrix $M'_m \in 
M_d(A)$. Hence, finding a convenient $M$ is just a direct application of 
the Chinese Remainder Theorem (recall that all $P_m$'s are irreducible 
and pairwise distinct polynomials).

Producing a matrix $M$ satisfying also the second assumption of Lemma 
\ref{lem:condiiv} is a bit more tricky but can be achieved using block 
LU decomposition. For $m \in \{1, \ldots, n\}$, let $r_m$ be the numbers 
of differents values taken by the sequence $(e_{m,1}, \ldots, e_{m,d})$ 
and let $d_{m,s}$ ($1 \leq s \leq r_m$) denote the number of times this 
sequence takes its $i$-th smallest value. We then have:
$$e_{m,1} = \cdots = e_{m,d_{m,1}} < e_{m,d_{m,1} + 1} = e_{m,d_{m,1} + 2} =
\cdots < e_{m,d_{m,1} + d_{m,2}} < e_{m, d_{m,1}+d_{m,2}+1} = \cdots$$
Assume now for a moment that all $M_m$'s admit a block LU 
factorization $M_m = L_m U_m$ of type $\partd_m = (d_{m,1}, \ldots, 
d_{m,r_m})$. Since $M_m$ is invertible, so is $U_m$. 
Let $V_m$ be the matrix obtained from $U_m$ by multiplying its 
$(i,j)$-th entry by $X_m^{e_{m,j}-e_{m,i}}$ (note that the exponent is 
always nonnegative when the $(i,j)$-th entry of $U_m$ does not vanish). 
Obviously, $V_m$ is again upper triangular and its diagonal entries are 
equal to those of $U_m$. Thus $U_m$ and $V_m$ share the same determinant 
and $V_m$ is invertible. Moreover, we check that $M_m D_m = L_m D_m 
V_m$, from what we derive that $M_m D_m$ is right-equivalent to $L_m 
D_m$. Since all $L_m$'s are unit lower triangular, there certainly 
exists a unit lower triangular matrix $L \in M_d(A)$ which is congruent 
to $L_m$ modulo $X_m^{e_{m,d}+1}$ for all $m$. Such a matrix is 
apparently invertible and also satisfies the assumption in part a) of 
Lemma \ref{lem:condiiv}. We can then simply take $M = L$.

Now let us go back to the general case where some $M_m$ might not have a 
block LU decomposition of type $\partd_m$. In that case, we denote by 
$M_m(0) \in M_d(K_m)$ the image of $M_m$ under the canonical projection 
$A_m \to A_m/P_m A_m = K_m$ (or, equivalently, $A_m \simeq K_m[[X_m]] 
\to K_m$). The coefficients of $M_m(0)$ then all lie in $K_m$, which is 
a finite extension of $K$. We can therefore apply Theorem 
\ref{th:simulLU2} which implies in particular the existence of a matrix 
$\omega \in \GL_d(K)$ such that $\omega \cdot M_m (0)$ has a block LU 
decomposition of type $\partd_m$ for all $m$. Lemma \ref{lem:liftLU} 
below shows that this decomposition lifts to a LU decomposition of type 
$\partd_m$ of $\omega M_m$.

\begin{lemma}
\label{lem:liftLU}
Let $L$ be a finite extension of $K$. Pick $M \in M_d(L[[Y]])$ and
denote by $M(0)$ its image in $M_d(L)$ under the projection $L[[Y]]
\to L$, $Y \mapsto 0$. Assume that $M(0)$ is invertible and admits 
a block LU decomposition of type $\partd$ for a certain partition 
$\partd$ of $d$. Then $M$ also does.
\end{lemma}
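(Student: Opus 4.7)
The plan is to reduce the statement to the standard principal-minor criterion recalled in the introduction of \S\ref{subsec:block}: an invertible matrix $M \in M_d(F)$ (over any field $F$) admits a block LU decomposition of type $\partd = (d_1,\ldots,d_r)$ if and only if its $(d_1+\cdots+d_s)$-th principal minor is non-zero for every $s \in \{1,\ldots,r-1\}$. Here we will apply this with $F = L((Y))$, the fraction field of $L[[Y]]$.

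First I would observe that for any subset $I \subset \{1,\ldots,d\}$, the minor $\det M_{I,I}$ is an element of $L[[Y]]$, and reducing modulo $Y$ (i.e.\ applying the ring map $L[[Y]] \to L$, $Y \mapsto 0$ entry-wise) commutes with taking determinants. Consequently, the image of $\det M_{I,I}$ in $L$ is exactly $\det M(0)_{I,I}$. Taking $I = \{1,\ldots, d_1+\cdots+d_s\}$ for $s \in \{1,\ldots,r-1\}$, the hypothesis that $M(0)$ admits a block LU decomposition of type $\partd$ forces $\det M(0)_{I,I} \neq 0$. Hence the power series $\det M_{I,I} \in L[[Y]]$ has non-zero constant term, so it is a unit in $L[[Y]]$ and in particular non-zero in $L((Y))$.

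Similarly, $\det M \in L[[Y]]$ reduces to $\det M(0) \in L^\times$ modulo $Y$, so $\det M$ is a unit in $L[[Y]]$ and $M$ is invertible in $M_d(L((Y)))$ (indeed even in $M_d(L[[Y]])$). Together with the previous paragraph, all the principal minors required by the criterion are non-zero, so $M$ admits a block LU decomposition of type $\partd$ over $L((Y))$, which is what the statement claims.

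I do not expect any genuine obstacle here: the whole content is the compatibility of determinants with the specialization $Y \mapsto 0$, combined with the elementary fact that a power series with non-zero constant coefficient is a unit. The only thing to double-check is that the principal-minor criterion for existence of a block LU decomposition, which is quoted in \S\ref{subsec:block} for matrices over $K$, carries over verbatim to matrices over an arbitrary field — which it does, since its proof is purely field-theoretic.
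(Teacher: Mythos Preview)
Your proof is correct and follows essentially the same route as the paper: both arguments reduce to the principal-minor criterion and check that each relevant minor of $M$ specializes at $Y=0$ to the corresponding (non-zero) minor of $M(0)$, hence has non-zero constant term. The only cosmetic difference is that the paper phrases the conclusion as ``each principal minor is a unit in the local ring $L[[Y]]$'' rather than passing to the fraction field $L((Y))$, but the content is identical.
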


\begin{proof}
Write $\partd = (d_1, \ldots, d_r)$ and set as usual $j(s) = d_1 + \cdots + 
d_s$ for all $s \in \{1, \ldots, r\}$. It is enough to check that, for 
all $s$, the $j(s)$-th principal minor of $M$, say $\delta_s(M)$, is 
invertible in $L[[Y]]$. But, $L[[Y]]$ being a local ring, $\delta_s(M)$ 
is invertible if and only if its image $\delta_s(M)(0)$ is. Now 
remark that this image is nothing but the corresponding minor of $M(0)$: 
in other words $\delta_i(M)(0) = \delta_i(M(0))$. The invertibility of 
$M(0)$ together with the fact that it has a block LU decomposition of 
type $\partd$ shows that $\delta_i(M(0))$ is invertible in $L$ and we 
are done.
\end{proof}

\noindent
We are now in position to argue as above. For all $m$, write $\omega M_m 
= L_m U_m$ the block LU decomposition of $M_m$ of type $\partd$. By the 
Chinese Remainder Theorem, there exists a unit lower triangular matrix 
$L$ with coefficients in $A$ such that $L \equiv L_m 
\pmod{X_m^{e_{m,d}+1}}$ for all $m$. The matrix $M = \omega^{-1} L$ then 
satisfies the two assumptions of Lemma \ref{lem:condiiv}. Hence, it 
satisfies also the conclusions of this Lemma and we have solved our 
problem. Algorithm \ref{algo:sheaf} summarizes the different steps of 
the proposed solution.
\begin{algorithm}
  $D$ $\leftarrow$ 
  $\Diag(P_1^{e_{1,1}} \cdots P_n^{e_{n,1}}, \ldots, \:
  P_1^{e_{1,d}} \cdots P_n^{e_{n,d}})$\;
  $\omega$ $\leftarrow$ a random matrix in $M_d(R)$\;
  \For {m from 1 to n}{
    compute $\partd_m = (d_{m,1}, \ldots, d_{m,r_m})$\;
    {$L_m$ $\leftarrow$ $L_{\partd_m} (\omega M_m)$ (computed by Algorithm \ref{algo:stableBLU})\label{line:Lm}\;}
  }
  $L$ $\leftarrow$ a unit lower triangular matrix in $M_d(A)$ such that
  $L \equiv L_m \pmod{X_m^{e_{m,d}+1}}$ for all $m$\label{line:chinese}\;
  \Return $(\omega^{-1} L, D)$\;
\caption{A solution to Problem \ref{problem:sheaf}\label{algo:sheaf}}
\end{algorithm}
Of course, if $\omega$ is not invertible or one of the $\omega M_m$'s 
does not admit a block LU decomposition of the required type, Algorithm
\ref{algo:sheaf} fails. If it happens, we simply rerun the algorithm
again and again until it works: it follows from Theorem \ref{th:simulLU2} 
that we will get the desired answer quite fast.

\medskip

Let us analyze quickly how much precision is loss in average by this 
method. In order to fix ideas, let us assume that the entries of the 
matrix $M_m$ are explicitely given as polynomials in $K_m[X_m]$ 
(eventually modulo $X_m^{e_{m,d}+1}$) and that all these polynomials are 
known with precision $O(\pi^N)$ for some integer $N$. For simplicity, we 
assume moreover that $M_m(0)$ has coefficients in the ring of integers 
$R_m$ of $K_m$ and that it is invertible in 
$M_d(R_m)$\footnote{Otherwise, we would need to take in account the 
valuation of $\det M_m(0)$ as in Remarks \ref{rem:simulLU} and 
\ref{rem:simulLU2}.}. Set as before $j_m(s) = d_{m,1} + \ldots + 
d_{m,s}$ and, for all admissible pair $(m,s)$, let denote by $D_{m,s}$ 
the determinant of the $j_m(s)$-th principal minor of $\omega M_m$. 
Define also $\delta(\omega)$ to the maximum of all $v_R(D_{m,s}(0))$ 
when $m$ and $s$ run over all the possibilities. By the proof of Theorem 
\ref{th:simulLU2}, we know that $\delta(\omega)$ is less than 
$$\textstyle v = \log_q (\frac 2{\Pi(q)}) + \log_q (r_1 + \cdots + 
r_n)$$
with probability at least $\frac 1 2$. In many concrete situations, it 
is not easy to compute exactly the $r_m$'s but it will nevertheless in 
general quite simple to estimate them. Indeed, going back to the 
definition, it is clear that $r_m$ is less than both $d$ and $e_{m,d}$ 
and these latter quantites are natural parameters on which we will in 
general have a good control (\emph{cf} \cite{caruso-lubicz}, \S 3.2 for 
a concrete example).
From now on, we assume that all matrices $M_m$ computed on line 
\ref{line:Lm} satisfy this estimation. If this property does not hold, 
we simply agree to rerun Algorithm \ref{algo:sheaf} until the desired 
property holds.

The next step is to measure the size of the denominators appearing in 
the following nonconstant coefficients. In order to do this, we 
introduce a new parameter $w$ by requiring that all matrices $M_m$ have 
coefficients in the ring $R_{m,w}$ defined as the image of $R_m 
[\frac{X_m}{\pi^w}]$ in the quotient ring $K_m[X_m]/X_m^{e_{m,d}+1}$. 
Clearly $D_{m,s}$ belongs to $R_{m,w}$ for all $(m,s)$ and, by we have 
said before, it has a representant whose constant coefficient has a 
valuation less than $v$. Its inverse $D_{m,s}^{-1}$ then belongs to 
$\pi^{-v} \cdot R_{m,v+w}$ and is known up to an element of $\pi^{N-2v} 
\cdot R_{m,v+w}$. All entries of $L_m$ will consequently be known with 
this precision.

It remains to analyze the line \ref{line:chinese} of Algorithm 
\ref{algo:sheaf}. Note that the matrix $L$ we want to compute can be 
expressed in terms of the $L_m$'s by the formula $L = C_1 L'_1 + \cdots 
+ C_n L'_n$ where:
\begin{itemize}
\item $L'_m$ is a matrix with coefficients in $K[X]$ whose reduction
modulo $P_m^{e_{d,m}+1}$ corresponds to $L_m$ \emph{via} the natural
isomorphism:
\begin{equation}
\label{eq:isomF}
K[X]/P_m^{e_{d,m}+1} \stackrel{\sim}{\longrightarrow} 
K_m[X_m]/X_m^{e_{d,m}+1}, \quad F(X) \mapsto \sum_{i=0}^{e_{d,m}} 
\frac{F^{(i)}(a_m)}{i!} \cdot X_m^i
\end{equation}
\item $C_m$ is a polynomial congruent to $1$ modulo $P_m^{e_{d,m}+1}$
and divisible by $P_{m'}^{e_{d,m'}+1}$ for all $m' \neq m$.
\end{itemize}
In order to bound the loss of precision as we would like to do, we 
assume for simplicity that all $P_m$'s are entirely known. We introduce 
again two new parameters. The first one is an integer $v_1$ for which we
require that the image of $R[X]/P_m^{e_{d,m}+1}$ under the isomorphism 
\eqref{eq:isomF} contains $\pi^{v_1} \cdots R_m[X_m]/X_m^{e_{d,m}+1}$ 
for all $m$. The second parameter is the integer $v_2$ defined as the 
opposite of the smallest valuation of a coefficient of the unique 
polynomial $C_m$ of degree $< \sum_{m=1}^n (e_{d,m}+1) \deg P_m$ 
satisfying the above condition. Now, remember that we have proved that 
$L_m$ are known up to an element of $\pi^{N-2v} \cdot R_{m,v+w}$. It is
then \emph{a fortiori} known up to an element of $\pi^{N-2v-e(v+w)}$ 
where $e = \max(e_{1,d}, \ldots, e_{n,d})$. Inverting the isomorphism 
\eqref{eq:isomF}, we find that $L'_m$ is certainly known modulo 
$\pi^{N-2v-e_{m,d}(v+w)-v_1}$. Finally the formula $L = C_1 L'_1 + 
\ldots + C_n L'_n$ shows that $L$ is known with precision $O(\pi^{N - 
2v-e(v+w)-v_1-v_2})$ (recall that we have assumed that the $P_m$'s --- 
and consequently the $C_m$'s --- are known with infinite precision). The 
total loss of precision of Algorithm \ref{algo:sheaf} is then bounded by 
$2v + e(v+w) + v_1 + v_2$.

\begin{rem}
The parameters $v_1$ and $v_2$ are \emph{not} easy to estimate in 
general. One can nevertheless keep in mind the following: $v_1$ measures 
the ramification of the roots $a_m$ of the $P_m$'s and $v_2$ measures 
the distance between these roots. For instance, to be more precise, one 
can easily prove that if all $a_m$ lie in the ring of integers of an 
unramified extension $K'$ of $K$ then one can just take $v_1 = 0$. If in 
addition the $a_m$'s are pairwise distinct in the residue field of $K'$ 
(which is the same as to be distinct in the residue field of $K$ since $K'/K$ is 
unramified), one can also take $v_2 = 0$. In that very particular case, 
the computation of line \ref{line:chinese} does not generate any loss of 
precision. We refer to \cite{caruso-lubicz} for a quite different 
example where the constants $v_1$ and $v_2$ do not vanish but stay 
nevertheless under control.
\end{rem}

Here is a final important remark. Algorithm \ref{algo:sheaf} still works 
if, instead of computing (the $L$-part) of the block LU decomposition of 
$\omega M_m$, we compute a unit lower triangular (and not \emph{block} 
unit lower triangular) $L_m$ such that there exists a block upper 
triangular (with respect to $\partd$) matrix $U_m$ with the property 
that $M_m = L_m U_m$. Indeed, the knowledge of these $L_m$'s is enough 
to compute $L$ (which need to be only unit lower triangular) and then to 
conclude using Lemma \ref{lem:condiiv}. This remark is important because 
Algorithm \ref{algo:stableBLU} spends some time in line 
\ref{line:clearblock} in clearing entries in order to make the computed 
matrix $L$ block unit lower triangular instead of simply unit lower 
triangular. In other words, commenting the line \ref{line:clearblock} in 
Algorithm \ref{algo:stableBLU} speeds up the execution of Algorithm 
\ref{algo:sheaf} but do not have any influence on its correctness.

\end{document}